\documentclass[12pt]{article}
\usepackage{amsmath,amsthm,amssymb,amscd,fancybox,epsfig,float,subfigure}
\usepackage[margin=1in]{geometry}
\usepackage{graphics, hyperref}
\usepackage{epstopdf}
\usepackage{multirow}
\usepackage{hhline}
\usepackage{cleveref}
\usepackage{xcolor}
\usepackage{tikz}
\usetikzlibrary{arrows,automata}
\usepackage{natbib} 
\bibliographystyle{plainnat}
\usepackage{pgfplots}

\newcommand{\beqn}{\begin{equation}}
\newcommand{\eeqn}{\end{equation}}

\newcommand{\bR}{\mathbb{R}}

\newcommand{\bbmat}{\begin{bmatrix}}
\newcommand{\ebmat}{\end{bmatrix}}

\usepackage[ruled,vlined,linesnumbered]{algorithm2e}
\newcommand{\R}{\mathbb{R}}

\def\bb{\begin{equation}}
\def\ee{\end{equation}}

\newtheorem{theorem}{Theorem}

\newtheorem{lemma}{Lemma}

\title{Fast methods for posterior inference of two-group normal-normal models}
\author{Philip Greengard\thanks{Department of Statistics, Columbia University,  Corresponding author, Email: pg2118@columbia.edu}, Jeremy Hoskins\thanks{Department of Statistics, University of Chicago}, Charles C.Margossian\thanks{Department of Statistics, Columbia University}, \\ Andrew Gelman\thanks{Department of Statistics and Political Science, Columbia University}, Aki Vehtari\thanks{Department of Computer Science, Aalto University}}
\date{2 Oct 2021}
\begin{document}

\maketitle

\begin{abstract}
We describe a class of algorithms for evaluating posterior moments of certain Bayesian linear regression models with a normal likelihood and a normal prior on the regression coefficients. The proposed methods can be used for hierarchical mixed effects models with partial pooling over one group of predictors, as well as random effects models with partial pooling over two groups of predictors. We demonstrate the performance of the methods on two applications, one involving U.S. opinion polls and one involving the modeling of COVID-19 outbreaks in Israel using survey data. 
The algorithms involve analytical marginalization of regression coefficients followed by numerical integration of the remaining low-dimensional density. The dominant cost of the algorithms is an eigendecomposition computed once for each value of the outside parameter of integration. Our approach drastically reduces run times compared to state-of-the-art Markov chain Monte Carlo (MCMC) algorithms. The latter, in addition to being computationally expensive, can also be difficult to tune when applied to hierarchical models.
\end{abstract}

\tableofcontents

\section{Introduction}
Advances over the last decade in statistical methods and their implementation 
in open-source, user-friendly software have drastically simplified statistical 
modeling for applied researchers. 
For example, with  probabilistic programming languages such as Stan \citep{carpenter1}
a user can specify and sample from a very general choice of posterior density 
with flexible language and an easy-to-use interface. For its primary tool of 
inference, Stan (as well as other probabilistic programming
languages) samples from the posterior distribution via 
dynamic Hamiltonian Monte Carlo sampler (HMC) \citep{betancourt:2018, hoffman:2014}. HMC is a grandient-based sampling method that 
has become ubiquitous in statistics over the last decade due to its being flexible, 
reliable, and general. 

Despite its widespread use, HMC, as well as other Markov chain Monte Carlo 
(MCMC) methods, have a substantial drawback in statistical 
problems with large amounts of data -- they can be prohibitively slow (and difficult to tune \citep[e.g.][]{betancourt:2015}).  For example, in the case of a linear regression
with $n$ observations and $k$ predictors, evaluation of the posterior density requires $O(nk)$ operations with straightforward implementation. To make matters worse, MCMC methods require large numbers of evaluations of the posterior density, and in the case of HMC, the posterior's gradient. 

Alternative methods for inference have been proposed for problems where 
MCMC is impractical. These approaches typically involve
a suitable approximation of the posterior density with a function with 
desirable properties. Laplace approximation methods \citep[e.g.][]{Margossian:2020} and variational inference \citep{vi} are two examples. More generally, there is 
extensive literature on efficient computational tools and analysis of posterior 
densities, and there are various software packages devoted to their implementation
\citep[see, e.g.][]{inla, tmb}. 

While these packages, and indeed most of the literature, are devoted to general tools for a wide range of posterior densities, in this paper we introduce an efficient algorithm for computing posterior 
expectations for two particular classes of Bayesian regression models---two-group 
normal-normal models and mixed-effects models. 
These classes of models find a broad range of applications in, for example, 
social sciences, epidemiology, biochemistry, and environmental sciences \citep{bda, gelman_hill_2006, greenland1, merlo1, 35}. 
Furthermore, in the broader context of model development, these regression models 
can serve as template models \citep{bayesian_workflow}.  

Using general MCMC methods for sampling from these posteriors can be exceedingly slow for problems with large amounts of data. By specializing to this particular family of models, we leverage their structure to create customized algorithms for fast and accurate inference.

The two Bayesian linear regression models we consider are:

\begin{enumerate}

\item \textbf{Two group normal-normal}: 
We define the two-group normal-normal model by
\bb\label{16}
\begin{split}
&y \sim \text{normal}(X_1\beta_1 + X_2\beta_2, \sigma_3) \\
&\beta_1 \sim \text{normal}(0, \sigma_1)\\
&\beta_2 \sim \text{normal}(0, \sigma_2),
\end{split}
\ee
where $X_1$ is a $n \times k_1$ matrix, $\beta_1 \in \R^{k_1}$ is a vector
of regression coefficients, $X_2$ is a $n \times 
k_2$ matrix,  and $\beta_2 \in \R^{k_2}$ is a vector of regression coefficients.
For Bayesian inference, we assume priors on the scale parameters 
$\sigma_1, \sigma_2, \sigma_3$. The performance of the algorithm is largely independent to the choice of these priors.
In the models that we use in this paper, we assign independent weakly informative $\text{normal}^+(0, 1)$ priors on the variance parameters $\sigma_1,\sigma_2, \sigma_3$ (assuming $y$ and the columns of $X$ have been normalized to have standard deviation $1$).

\item \textbf{Mixed effects}: 
The mixed-effects model differs slightly from the two-group normal-normal model. Instead of modeling the scale parameter $\sigma_2$, fixed 
scale parameters are assigned to the normal priors on $\beta_2$. The mixed-effects model is defined by
\bb\label{17}
\begin{split}
&y \sim \text{normal}(X_1\beta_1 + X_2\beta_2, \sigma_3) \\
&\beta_1 \sim \text{normal}(0, \sigma_1)\\
&\beta_{2,i} \sim \text{normal}(0, \sigma_{2,i}),
\end{split}
\ee
where $\sigma_{2,i}$ is the fixed scale parameter prior on each regression coefficient $\beta_{2,i}$ for $i=1,...,k_2$ where $\beta_2 \in \R^{k_2}$. We will assume priors on the scale parameters $\sigma_1, \sigma_3$.

\end{enumerate}

The models we discuss in this paper are standard models of Bayesian statistics and appear when seeking to model an outcome, $y$, as a linear combination of two (or more) distinct groups of predictors.
The Gaussian prior on the predictors enable various strategies commonly used in statistical modeling and machine learning; notably regularization and partial pooling between various sources of data. We demonstrate these models on three applications. 

\begin{enumerate}

\item
\textbf{COVID-19}: Due to a lack of reliable, fast, and widespread testing, an online survey initiative was created in Israel~\citep{Rossman:2020} for tracking and predicting COVID-19 outbreaks. We constructed a mixed-effects model for estimating geographic and age effects on the spread of the virus. With tens of thousands of responses, straightforward implementation of MCMC methods takes hours. Using the methods of this paper, we obtain accurate posterior inference in seconds. 

\item
\textbf{Rat growth}: We demonstrate the efficiency of our two-group algorithm on the classical two-group model for rat growth~\citep{gelfand:1990}, which estimates the growth rates of a population of rats over the first few weeks of life. 

\item
\textbf{Public opinion on abortion}: We use 2018 results of the annual Cooperative Congressional Election Study (CCES) to estimate geographic and demographic effects on attitudes towards abortion. The CCES contains nearly $100,000$ responses, and performing inference via MCMC sampling can be prohibitively slow. We use the mixed-effects algorithm introduced in this paper to perform posterior inference in seconds. 

\end{enumerate}

The computational methods we introduce for the two-group normal-normal
model and the mixed-effects models are closely related. In fact, the mixed-effects
model is a special case of the two-group model. We organize 
this paper by first describing our algorithm for the two-group normal-normal 
model in detail, and then outline the minor modifications that allow for efficient 
evaluation of mixed-effects models. 

The unnormalized density corresponding to the two-group model 
is given by
\bb \label{10}
q(\beta,\sigma_1,\sigma_2,\sigma_3) = \frac{e^{-\sigma_1^2/2-\sigma_2^2/2-\sigma_3^2/2}}{\sigma_1^n \sigma_2^{k_1}\sigma_3^{k_2}} e^{-\frac{1}{2\sigma_1^2} \|X \beta - y\|^2}e^{-\frac{1}{2\sigma_2^2}\|\beta_1\|^2}e^{-\frac{1}{2\sigma_3^2}\|\beta_2\|^2},
\ee
where $\beta = (\beta_1, \beta_2)$ with $\beta_1 \in \mathbb{R}^{k_1}, \beta_2 \in \mathbb{R}^{k_2}, \beta \in \mathbb{R}^k,$ and $y \in \mathbb{R}^n.$ For convenience, we
will be denoting by $\sigma$ the vector of scale parameters 
$(\sigma_1, \sigma_2, \sigma_3) \in \R^3$.


In the methods of this paper, we compute posterior moments of $q$ by 
analytically reducing the calculation of moments from integrals over $k+3$
dimensions to $3$-dimensional integrals. We then integrate 
the remaining $3$-dimensional integrals with a tensor product of 
Gaussian nodes. For example, we evaluate the normalizing constant
$C$ and posterior means for the regression coefficients, $\beta$, via
\begin{align*}
& C  = \int_{0}^{\infty} \int_{-\infty}^{\infty} \, q(\beta, \sigma_1,\sigma_2,\sigma_3) \, d\beta \, d\sigma 
\approx \sum_{i=1}^n f(\sigma_i) w_i \\
& E[\beta_j] = \int_{0}^{\infty} \int_{-\infty}^{\infty} \beta_j \, q(\beta, \sigma_1,\sigma_2, \sigma_3) \, d\beta \, d\sigma 
\approx \frac{1}{C} \sum_{i=1}^{n} f_j(\sigma_i) w_i ,
\end{align*}
where $\sigma_i \in \R^3$ and $w_i \in \R$ are three-dimensional 
Gaussian nodes and weights~\citep{trefethen} and 
\begin{align}
& f(\sigma_i) = \int_{\R^k} \, q(\beta, \sigma_i) d\beta \label{747} \\
& f_j(\sigma_i) = \int_{\R^k} \beta_j \, q(\beta, \sigma_i) d\beta \label{757}.
\end{align}
Integrals (\ref{747}) and (\ref{757}) can be evaluated
analytically via well-known equations~\citep{lindley1}, but
a straightforward implementation of those equations results
in a computational cost of $O(m^3 k^3)$ operations where $m$ is 
the number of discretization nodes needed in each dimension. In the methods
of this paper, we improve the computational cost of those integrals to 
$O(mk^3 + m^2k^2 + m^3)$ operations after a change of variables, allowing for the rapid evaluation of marginals. 


The tools used in the algorithm of this paper are a generalization of the 
approach proposed by \citet{greeng1} and generalize to higher-dimensional 
multilevel and higher-dimensional multigroup posterior distributions. Since we 
integrate the marginal density using a tensor product of Gaussian 
nodes, the cost of the integration scales like $O(m^d)$ where $m$ is the 
number of discretization nodes in each direction and $d$ is the 
dimension of the marginalized integral (where $d=3$ in the models of this paper). As a result, higher dimensional 
problems require evaluation of marginal integrals via sampling-based 
algorithms and cannot rely solely on Gaussian quadrature. We leave the analysis 
and description of numerical tools for such models to a subsequent publication.

The structure of this paper is as follows. In the following section we describe the change of variables of (\ref{10}) and provide mathematical analysis that will be used in the algorithm of this paper. Section \ref{s30} includes formulas that will allow for the evaluation of the normalizing constant of (\ref{10}). In Section \ref{s40} we describe analysis that will be used in computing expectations and in Section \ref{s50} we describe formulas for second moments. In Section \ref{s60} we discuss details of the implementation of the algorithm. Section \ref{s101} contains an algorithm for a special case of the two-group normal-normal model in which one group is much smaller than the other. The mixed effects algorithm, or rather the modification of the two-group normal-normal algorithm, is contained in Section \ref{s102}. In Section \ref{sec:covid}, Section \ref{sec:rats} and Section \ref{sec:abort} we apply the algorithms of this paper to applications. Conclusions and generalizations of the algorithm of  this paper are presented in Section \ref{sec:conclusion}.

\section{Mathematical apparatus}\label{s20}
Over the next several sections, we describe a numerical algorithm 
for computing expectations and second moments of the density 
$q: \R^{k+3} \rightarrow \R^+$ defined by 
$$
q(\beta,\sigma_1,\sigma_2,\sigma_3) = \frac{e^{-\sigma_1^2/2-\sigma_2^2/2-\sigma_3^2/2}}{\sigma_1^n \sigma_2^{k_1}\sigma_3^{k_2}} e^{-\frac{1}{2\sigma_1^2} \|X\beta - y\|^2}e^{-\frac{1}{2\sigma_2^2}\|\beta_1\|^2}e^{-\frac{1}{2\sigma_3^2}\|\beta_2\|^2},
$$
where $\sigma_1, \sigma_2, \sigma_3 > 0$, $\beta = (\beta_1, \beta_2)$ with $\beta_1 \in \mathbb{R}^{k_1}, \beta_2 \in \mathbb{R}^{k_2}, \beta \in \mathbb{R}^k,$ and $y \in \mathbb{R}^n.$
We begin by introducing notation that will be used throughout the numerical 
sections of this paper. 

Let $y = X\tilde{\beta} + d,$ where $X^td =0$, so that $\tilde{\beta}$ is the least-squares solution to the linear system $X\beta=y$ and $d$ is the residual. Let $I_1$ be the diagonal $k \times k$ matrix with ones in the first $k_1$ places on the diagonal, and zeroes in the remaining $k_2$ places. Similarly, let $I_2$ be the diagonal $k \times k$ matrix with zeroes in the first $k_1$ places on the diagonal, and one in the remaining $k_2$ places.

We perform a change of variables in $\sigma_1,\sigma_2,$ and $\sigma_3,$ defining $\rho, \theta,$ and $\phi$ implicitly by
\begin{align*}
\sigma_1 &= \rho \cos \phi,\\
\sigma_2 &= \rho \sin{\phi} \cos{\theta},\\
\sigma_3 &= \rho \sin{\phi} \sin{\theta}.
\end{align*}
This corresponds to changing to spherical coordinates in the $\sigma$ variables. With these substitutions, and with some minor abuse of notation, we obtain
\begin{align*}
f(\beta,\rho, \theta,\phi) = &\frac{e^{-\rho^2/2}}{\rho^{n+k} \cos^n(\phi)\sin^k{\phi} \,\cos^{k_1}{\theta} \,\sin^{k_2}{\theta}}\nonumber\\
& \exp\left[-\frac{1}{2\rho^2}\left(\frac{1}{\cos^2\phi} \|X(\beta-\tilde{\beta})\|^2 + \frac{\|d\|^2}{\cos^2{\phi}} +\frac{\beta^t \left(\frac{I_1}{\cos^2\theta} + \frac{I_2}{\sin^2\theta} \right)\beta}{\sin^2{\phi}} \right)\right].
\end{align*}
The differentials transform as follows:
\begin{align*}
{\rm d}\sigma_1\,{\rm d}\sigma_2\,{\rm d}\sigma_3 = \rho^2 \sin{\phi} \, {\rm d}\rho\,{\rm d}\theta\,{\rm d}\phi.
\end{align*}
Moreover, the condition that $\sigma_1,\, \sigma_2, \sigma_3 >0,$ is equivalent to $0 < \phi,\theta < \pi/2.$

\section{Normalizing constant}\label{s30}
In this section we describe the computation of the integral of $f$ over its domain. If we denote this quantity by $I_0,$ then $f/I_0$ is a probability density on $\mathbb{R}^+\times\mathbb{R}^+\times\mathbb{R}^+\times \mathbb{R}^k.$ We begin by making a change of variables in $\beta,$ setting $\beta = ( I_1 \cos{\theta} +I_2 \sin{\theta})z.$ Similarly, we define $\tilde{z}$ implicitly by $\tilde{\beta} = ( I_1\cos{\theta} +I_2 \sin{\theta})\tilde{z}.$ Then,
\begin{align*}
f(z,\rho, \theta,\phi) = &\frac{e^{-\rho^2/2}}{\rho^{n+k} \sin^k{\phi} \,\cos^n{\phi}\,\cos^{k_1}{\theta} \,\sin^{k_2}{\theta}}\nonumber\\
& \exp\left[-\frac{1}{2\rho^2}\left(\frac{1}{\cos^2\phi} \|X_\theta (z-\tilde{z})\|^2 + \frac{\|d\|^2}{\cos^2{\phi}} +\frac{\|z\|^2}{\sin^2{\phi}} \right)\right],
\end{align*}
where $X_\theta = X \,( I_1 \cos{\theta} +I_2  \sin{\theta}).$ The differentials transform as follows
\begin{align*}
{\rm d}\beta_1\, \dots \,{\rm d}\beta_k =\cos^{k_1}\theta\,\sin^{k_2}\theta\,\, {\rm d}z_1\, \dots \,{\rm d}z_k. 
\end{align*}

To diagonalize the quadratic form appearing in the exponent, we perform an eigendecomposition of on $X_\theta^t X_{\theta}$ obtaining
\begin{align}\label{340b}
X_\theta^t X_\theta=  V_\theta D_\theta V^t_\theta,
\end{align}
where $D_\theta \in \mathbb{R}^{k \times k}$ is diagonal and positive, and $V_\theta \in \mathbb{R}^{k \times k}$ is a unitary matrix. We have assumed here that $n \ge k.$ If the converse is true then a small modification is required. In the following, for notational convenience we denote the diagonal entries of $D_\theta$ by $\lambda_i(\theta).$

Next we set $z = V_\theta w$ and $\tilde{z} = V_\theta \tilde{w}.$ In terms of the original variables, $\beta = (I_1 \cos \theta + I_2 \sin \theta)V_\theta w$ and $\tilde{\beta}  = (I_1 \cos \theta + I_2 \sin \theta)V_\theta \tilde{w}.$ In particular,
\begin{align*}
{\rm d}\beta_1\, \dots \,{\rm d}\beta_k =\cos^{k_1}\theta\,\sin^{k_2}\theta\,\, {\rm d}w_1\, \dots \,{\rm d}w_k. 
\end{align*}

After making these substitutions, we obtain
\begin{align*}
\int &\dots \int f(\beta, \sigma_1,\sigma_2,\sigma_3)\,{\rm d}\beta_1\, \dots \,{\rm d}\beta_k \nonumber\\
&= \frac{e^{-\frac{\rho^2}{2}-\frac{\|d\|^2}{2 \rho^2 \cos^2{\phi}}}}{\rho^{n+k}\cos^n\phi \,\sin^k\phi} \int \dots \int \,\exp\left[-\frac{1}{2\rho^2}\sum_{i=1}^k  \left(\frac{\lambda_i (w_i-\tilde{w}_i)^2}{\cos^2\phi}+ \frac{w_i^2}{\sin^2\phi} \right)\right]\,{\rm d}w_1\, \dots \,{\rm d}w_k \\
&= \frac{e^{-\frac{\rho^2}{2}-\frac{\|d\|^2}{2 \rho^2 \cos^2{\phi}}}}{\rho^{n+k}\cos^n\phi \,\sin^k\phi} \prod_{i=1}^k \int \,\exp\left[-\frac{1}{2\rho^2}  \left(\frac{\lambda_i (w_i-\tilde{w}_i)^2}{\cos^2\phi}+ \frac{w_i^2}{\sin^2\phi} \right)\right]\,{\rm d}w_i.
\end{align*}
Thus, the integrals over the $\beta$ variables have been reduced to the product of $k$ one-dimensional Gaussian integrals. Using the identity
\begin{align*}
\int_\mathbb{R} e^{-\frac{a}{2} (s-s_0)^2 -\frac{b}{2} s^2} {\rm d}s = \sqrt{\frac{2\pi}{a+b}}e^{-\frac{ab \,s_0^2}{2(a+b)}},
\end{align*}
we find that
\begin{align*}
\int &\dots \int f(\beta, \sigma_1,\sigma_2,\sigma_3)\,{\rm d}\beta_1\, \dots \,{\rm d}\beta_k \\
&= \frac{e^{-\frac{\rho^2}{2}-\frac{\|d\|^2}{2 \rho^2 \cos^2{\phi}}}}{\rho^{n+k}\cos^n\phi \,\sin^k\phi} \prod_{i=1}^k \sqrt{\frac{2\pi \rho^2}{\frac{\lambda_i}{\cos^2\phi}+\frac{1}{\sin^2\phi}}} \exp\left[-\frac{1}{2\rho^2} \tilde{w_i}^2 \frac{\lambda_i}{\cos^2\phi \sin^2 \phi \left(\frac{\lambda_i}{\cos^2\phi}+\frac{1}{\sin^2\phi} \right)} \right]\\
&= \frac{e^{-\frac{\rho^2}{2}-\frac{\|d\|^2}{2 \rho^2 \cos^2{\phi}}}}{\rho^{n}\cos^{n-k}\phi } \prod_{i=1}^k \sqrt{\frac{2\pi }{\lambda_i \sin^2\phi +\cos^2 \phi}} \exp\left[-\frac{1}{2\rho^2}  \frac{\lambda_i \tilde{w_i}^2}{{\lambda_i \sin^2\phi}+{\cos^2\phi} } \right].
\end{align*}
Next, we define the functions $\alpha: (0,\pi/2)^2 \to \mathbb{R}^+$ and $\beta: (0,\pi/2)^2 \to \mathbb{R}^+$ by
\begin{align*}
\alpha(\phi,\theta) = \prod_{i=1}^k \sqrt{\frac{2\pi }{\lambda_i(\theta) \sin^2\phi +\cos^2 \phi}} ,\\
\beta(\phi,\theta) = \sum_{i=1}^k \frac{\lambda_i(\theta)\tilde{w}_i^2(\theta)}{\lambda_i(\theta) \sin^2 \phi + \cos^2\phi}.
\end{align*}
Then
\begin{align*}
\int &\dots \int f(\beta, \sigma_1,\sigma_2,\sigma_3)\,{\rm d}\beta_1\, \dots \,{\rm d}\beta_k=\frac{e^{-\frac{\rho^2}{2}-\frac{\|d\|^2}{2 \rho^2 \cos^2{\phi}}}}{\rho^{n}\cos^{n-k}\phi } \alpha(\phi,\theta) e^{-\frac{1}{2\rho^2} \beta(\phi,\theta)}.
\end{align*}
For a fixed $\theta,$ the vector $\tilde{w}$ and the eigenvalues values $\lambda_i$ are independent of $\rho$ and $\phi,$ and hence need to be recomputed only when $\theta$ is changed. Moreover, for fixed $\theta$ and $\phi,$ the above integral can be computed in $O(1)$ floating operations for each new value of $\rho.$ In particular, the normalization constant can be computed efficiently via the formula
\begin{align}\label{360}
I_0 = \int_0^{\frac{\pi}{2}}\int_0^{\frac{\pi}{2}}\frac{\alpha(\phi,\theta)  \sin\phi}{\cos^{n-k}\phi}\int_0^\infty\frac{e^{-\frac{\rho^2}{2}-\frac{1}{2\rho^2}\left(\frac{\|d\|^2}{\cos^2\phi} + \beta(\phi,\theta)\right)}}{\rho^{n}}\rho^2 \,{\rm d}\rho\, {\rm d}\phi\,{\rm d}\theta.
\end{align}

\section{Expectations of \texorpdfstring{$\beta$}{β}}\label{s40}
In this section we describe how to compute moments in $\beta$ of the distribution $f$. We begin by observing that
\begin{align*}
\beta_\ell = Q_\ell(\theta) \sum_{\ell=1}^k \left(V_\theta\right)_{\ell,j} {w}_j,
\end{align*}
where $Q_\ell(\theta) = \cos \theta$ if $1 \le \ell \le k_1,$ and $\sin \theta$ if $k_1< \ell \le k.$ Let $M_j (\rho,\phi,\theta)$ be defined by
\begin{align*}
M_j(\rho_\phi,\theta)=&\frac{e^{-\frac{\rho^2}{2}-\frac{\|d\|^2}{2 \rho^2 \cos^2{\phi}}}}{\rho^{n+k}\cos^n\phi \,\sin^k\phi} \prod_{i\neq j}^k \int \,\exp\left[-\frac{1}{2\rho^2}  \left(\frac{\lambda_i (w_i-\tilde{w}_i)^2}{\cos^2\phi}+ \frac{w_i^2}{\sin^2\phi} \right)\right]\,{\rm d}w_i\\
&\quad \times \int \,w_j\,\exp\left[-\frac{1}{2\rho^2}  \left(\frac{\lambda_j (w_j-\tilde{w}_j)^2}{\cos^2\phi}+ \frac{w_j^2}{\sin^2\phi} \right)\right]\,{\rm d}w_j.
\end{align*}
Using the results of the previous section we can perform the integrals to obtain
\begin{align*}
M_j(\rho,\phi,\theta) =& \frac{e^{-\frac{\rho^2}{2}-\frac{\|d\|^2}{2 \rho^2 \cos^2{\phi}}}}{\rho^{n+k}\cos^n\phi \,\sin^k\phi} \prod_{i\neq j}^k \sqrt{\frac{2\pi \rho^2}{\frac{\lambda_i}{\cos^2\phi}+\frac{1}{\sin^2\phi}}} \exp\left[-\frac{1}{2\rho^2} \tilde{w_i}^2 \frac{\lambda_i}{\cos^2\phi \sin^2 \phi \left(\frac{\lambda_i}{\cos^2\phi}+\frac{1}{\sin^2\phi} \right)} \right]\\
&\quad \times \frac{\lambda_j\tilde{w}_j}{\cos^2{\phi}\left(\frac{\lambda_j}{\cos^2\phi}+\frac{1}{\sin^2\phi} \right)} \sqrt{\frac{2\pi \rho^2}{\frac{\lambda_j}{\cos^2\phi}+\frac{1}{\sin^2\phi}}} \exp\left[-\frac{1}{2\rho^2} \tilde{w_j}^2 \frac{\lambda_j}{\cos^2\phi \sin^2 \phi \left(\frac{\lambda_j}{\cos^2\phi}+\frac{1}{\sin^2\phi} \right)} \right]\\
=&\left(\frac{\lambda_j \tilde{w}_j}{\lambda_j \sin^2\phi+ \cos^2\phi} \right) \sin^2 \phi\left[\frac{e^{-\frac{\rho^2}{2}-\frac{\|d\|^2}{2 \rho^2 \cos^2{\phi}}}}{\rho^{n}\cos^{n-k}\phi } \alpha(\phi,\theta) e^{-\frac{1}{2\rho^2} \beta(\phi,\theta)}\right].
\end{align*}
We remark that the second factor in the above expression is the same as the one arising in the computation of $I_0.$ The first factor depends on $j,$ $\phi,$ and $\theta$ but not on $\rho.$ For ease of exposition, let us define $\tilde{M}_j$ by
\begin{align}\label{817}
\tilde{M}_j (\phi,\theta) = \left(\frac{\lambda_j \tilde{w}_j}{\lambda_j \sin^2\phi+ \cos^2\phi} \right).
\end{align}
Then
\begin{align}\label{120}
E[\beta_\ell] = \sum_{\ell,j} \frac{1}{I_0}\int_0^{\frac{\pi}{2}}Q_\ell(\theta)\,(V_\theta)_{\ell,j}\int_0^{\frac{\pi}{2}}\frac{\alpha(\phi,\theta)  \sin^3\phi}{\cos^{n-k}\phi}\tilde{M}_j(\phi,\theta)\,\int_0^\infty\frac{e^{-\frac{\rho^2}{2}-\frac{1}{2\rho^2}\left(\frac{\|d\|^2}{\cos^2\phi} + \beta(\phi,\theta)\right)}}{\rho^{n}}\rho^2 \,{\rm d}\rho\, {\rm d}\phi\,{\rm d}\theta.
\end{align}
In particular, if $I_1^{(j)}$ is defined via the formula
\begin{align}\label{130}
I_1^{(j)}(\theta) =\int_0^{\frac{\pi}{2}}\frac{\alpha(\phi,\theta)  \sin^3\phi}{\cos^{n-k}\phi}\tilde{M}_j(\phi,\theta)\,\int_0^\infty\frac{e^{-\frac{\rho^2}{2}-\frac{1}{2\rho^2}\left(\frac{\|d\|^2}{\cos^2\phi} + \beta(\phi,\theta)\right)}}{\rho^{n}}\rho^2 \,{\rm d}\rho\, {\rm d}\phi,
\end{align}
then 
\begin{align}\label{345}
E[\beta_\ell] =\int_0^{\frac{\pi}{2}} \sum_{\ell,j} Q_i(\theta) (V_\theta)_{\ell,j} I_1^{(j)}(\theta)\,{\rm d}\theta,
\end{align}
and hence all $k$ moments can be computed simultaneously with an integrand requiring $O(k^2)$ operations to compute (assuming the number of quadrature nodes required to achieve a fixed precision is more or less independent of $k$).

\section{Covariance of \texorpdfstring{$\beta$}{β}}\label{s50}
In this section, we describe formulas for computing the posterior covariance matrix of $\beta$. We use the identity 
\bb\label{110}
-\frac{a}{2}(s-s_0)^2 - \frac{b}{2}s^2 = -\frac{a+b}{2}\Bigg(s - \frac{as_0}{-(a+b)}\Bigg)^2 + \frac{-abs_0^2}{2(a+b)}
\ee
to compute second moments of $w_j$. That is, 
letting $P_j (\rho,\phi,\theta)$ be defined by
\begin{align*}
P_j(\rho, \phi,\theta)=&\frac{e^{-\frac{\rho^2}{2}-\frac{\|d\|^2}{2 \rho^2 \cos^2{\phi}}}}{\rho^{n+k}\cos^n\phi \,\sin^k\phi} \prod_{i\neq j}^k \int \,\exp\left[-\frac{1}{2\rho^2}  \left(\frac{\lambda_i (w_i-\tilde{w}_i)^2}{\cos^2\phi}+ \frac{w_i^2}{\sin^2\phi} \right)\right]\,{\rm d}w_i\\
&\quad \times \int \,w_j^2\,\exp\left[-\frac{1}{2\rho^2}  \left(\frac{\lambda_j (w_j-\tilde{w}_j)^2}{\cos^2\phi}+ \frac{w_j^2}{\sin^2\phi} \right)\right]\,{\rm d}w_j \\
=&\frac{e^{-\frac{\rho^2}{2}-\frac{\|d\|^2}{2 \rho^2 \cos^2{\phi}}}}{\rho^{n+k}\cos^n\phi \,\sin^k\phi} \prod_{i\neq j}^k \int \,\exp\left[-\frac{1}{2\rho^2}  \left(\frac{\lambda_i (w_i-\tilde{w}_i)^2}{\cos^2\phi}+ \frac{w_i^2}{\sin^2\phi} \right)\right]\,{\rm d}w_i\\
&\quad \times \int \,w_j^2\,\exp\left[-\frac{1}{2\rho^2}  \left(\frac{\lambda_j (w_j-\tilde{w}_j)^2}{\cos^2\phi}+ \frac{w_j^2}{\sin^2\phi} \right)\right]\,{\rm d}w_j
.
\end{align*}
we use (\ref{110}) to obtain
\begin{align*}
P_j(\rho,\phi,\theta) =& \frac{e^{-\frac{\rho^2}{2}-\frac{\|d\|^2}{2 \rho^2 \cos^2{\phi}}}}{\rho^{n+k}\cos^n\phi \,\sin^k\phi} \prod_{i\neq j}^k \sqrt{\frac{2\pi \rho^2}{\frac{\lambda_i}{\cos^2\phi}+\frac{1}{\sin^2\phi}}} \exp\left[-\frac{1}{2\rho^2} \tilde{w_i}^2 \frac{\lambda_i}{\cos^2\phi \sin^2 \phi \left(\frac{\lambda_i}{\cos^2\phi}+\frac{1}{\sin^2\phi} \right)} \right]\\
&\quad \times \frac{\lambda_j\tilde{w}_j}{\cos^2{\phi}\left(\frac{\lambda_j}{\cos^2\phi}+\frac{1}{\sin^2\phi} \right)} 
\sqrt{\frac{2\pi \rho^2}{\frac{\lambda_j}{\cos^2\phi}+\frac{1}{\sin^2\phi}}} \exp\left[-\frac{1}{2\rho^2} \tilde{w_j}^2 \frac{\lambda_j}{\cos^2\phi \sin^2 \phi \left(\frac{\lambda_j}{\cos^2\phi}+\frac{1}{\sin^2\phi} \right)} \right]\\
=&\left(\frac{\rho^2}{\frac{\lambda_j}{\cos^2\phi} + \frac{1}{\sin^2\phi}} \right) \sin^2 \phi\left[\frac{e^{-\frac{\rho^2}{2}-\frac{\|d\|^2}{2 \rho^2 \cos^2{\phi}}}}{\rho^{n}\cos^{n-k}\phi } \alpha(\phi,\theta) e^{-\frac{1}{2\rho^2} \beta(\phi,\theta)}\right]
- M_j(\rho, \phi,\theta)^2.
\end{align*}
Defining $\tilde{P}_j$ by
\begin{align*}
\tilde{P}_j (\phi,\theta) = \frac{\rho^2}{\frac{\lambda_j}{\cos^2\phi} + \frac{1}{\sin^2\phi}}\end{align*}
and defining $I_2^{(j)}$ via the formula
\begin{align}\label{130b}
I_2^{(j)}(\theta) =\int_0^{\frac{\pi}{2}}\frac{\alpha(\phi,\theta)  \sin^3\phi}{\cos^{n-k}\phi}\tilde{P}_j(\phi,\theta)\,\int_0^\infty\frac{e^{-\frac{\rho^2}{2}-\frac{1}{2\rho^2}\left(\frac{\|d\|^2}{\cos^2\phi} + \beta(\phi,\theta)\right)}}{\rho^{n}}\rho^2 \,{\rm d}\rho\, {\rm d}\phi,
\end{align}
we observe that
\begin{align*}
E[\beta\beta^t]_{i,m} = \int_0^{\frac{\pi}{2}} (C I_2(\theta) C^t)_{i,j}
\,d\theta
\end{align*}
where $C$ is the $k \times k$ matrix defined by
\begin{align*}
C_{i,j} = \sum_{m} Q_i(\theta) (V_\theta)_{m,j}.
\end{align*}
We then compute the posterior covariance of $\beta$ via
\bb
\begin{split}
E[(\beta-E[\beta])(\beta-E[\beta])^t] = E[\beta\beta^t] - E[\beta]E[\beta]^t,
\end{split}
\ee
where $E[\beta]$ is obtained via (\ref{345}).

\section{Numerical implementation}\label{s60}
We now describe a numerical approach for computing the normalizing constant $I_0$, see (\ref{360}), and moments of $q$ (see (\ref{10})). The quadrature rules used provide arbitrary user-specified precision for both the normalizing constant as well as moments. Our integration scheme is a tensor product of Gaussian nodes in the $\theta$, $\phi$, and $\rho$ directions. 

In the remainder of this section we describe how to adaptively determine integration bounds in the $\phi$ and $\rho$ directions as well as the number of nodes to use in $\theta$. 

\subsection*{Integrals with respect to \texorpdfstring{$\rho$}{ρ}}
We first describe an approach for integrating the inner integral,
\bb\label{310}
\int_0^\infty\frac{e^{-\frac{\rho^2}{2}-\frac{1}{2\rho^2}\left(\frac{\|d\|^2}{\cos^2\phi} + \beta(\phi,\theta)\right)}}{\rho^{n-2}}\,{\rm d}\rho.
\ee
We do this by finding upper and lower integration bounds and then performing Gaussian quadrature with $80$ nodes. 
The maximum of the integrand of (\ref{310}), which we denote 
$\rho_{max}$, is achieved at 
\begin{equation}
\rho_{max} = \frac{1}{\sqrt{2}}\left(\sqrt{4\frac{\|d\|^2}{\cos^2\phi} + 4\beta(\phi,\theta)+(n-2)^2}-n+2\right)^{1/2}.
\end{equation}
See Appendix \ref{app1} for details. Furthermore, for all $\rho > \rho_{max}$, the integrand is monotonically decreasing and for all $\rho < \rho_{max}$, the integrand is monotonically increasing. 
We choose our upper integration bound, $\rho_1$, to be the value $\rho_1 > \rho$ such that the integrand of (\ref{310}) is smaller than its maximum by a factor of $10^{20}$. 
We evaluate $\rho_1$ with bisection with initial bounds of
\bb
\rho_{max} \qquad \text{and} \qquad \rho_{max} + \frac{40}{\sqrt{-\sigma}},
\ee
where $\sigma$ is the second derivative of the integrand of (\ref{310}) with respect to $\rho$ evaluated at $\rho_{max}$ (see Appendix \ref{app1}). 
We also find the lower bound of integration using bisection where our 
starting bounds for the bisection are
0 and $\rho_{max}$.
After finding the bounds of integration, we evaluate (\ref{310}) using Gaussian quadrature. Using $80$ nodes was sufficient for full double precision accuracy in our examples. 

\subsection*{Integral with respect to \texorpdfstring{$\phi$}{φ}}
We evaluate the integral $I_1^{(j)}(\theta)$ (see (\ref{130})) using Gaussian quadrature where we determine the upper bound of integration adaptively. That is, we determine the upper bound of integration, $\phi_1$, by sequentially evaluating the integrand of (\ref{130}) at order $100$ Gaussian nodes until the integrand is smaller than the maximum observed value of the integrand by a factor of $10^{20}$. We then declare the first such point, $\phi_1$, to be the upper integration bound and evaluate (\ref{130}) with $80$ Gaussian nodes on the interval $(0, \phi_1)$.

\subsection*{Integral with respect to \texorpdfstring{$\theta$}{θ}}
In this section, we describe a technique for evaluating the outer integral of (\ref{345}). For each evaluation of the integrand of (\ref{345}), we perform eigendecomposition (\ref{340b}). As a result, each evaluation of the integrand requires $O(k^3)$ operations and the total computational time for evaluating moments of posterior (\ref{10}) is roughly linear in the number of evaluations of the integrand of (\ref{345}). 

We compute integral (\ref{345}) using Gaussian quadrature. 
We found that for a general set of problems, the integrand of 
(\ref{345}) was sufficiently smooth, that using around $10$ nodes 
was sufficient for several digits of accuracy. To check accuracy
of an $m$-point Gaussian quadrature we compute the same integral
with $2m$ nodes and check the difference, which is a 
proxy for the accuracy of the $m$-point Gaussian quadrature. 

For problems in which the number of evaluations of the integrand
of (\ref{345}) needs to be reduced, we can use the following approach. 
Approximate the integral using two Chebyshev nodes 
(see, e.g., \cite{trefethen}) and four Chebyshev nodes and compute the difference between the two values. That difference provides an estimate for the error of the approximation using two nodes. If the difference between the two approximations is larger than desired, then double the number of nodes and again compute the difference. Continue to double the number of nodes until the desired accuracy is achieved. 

By using practical Chebyshev nodes in $\theta$ the order $n$ nodes are a subset of the order $2n$ nodes. This saves $n$ evaluations of the integrand when approximating the integral with $2n$ nodes.

\begin{algorithm}[H]\label{a10}
\DontPrintSemicolon
\SetAlgoLined
Construct $\theta_1,...,\theta_m$, Gaussian nodes and weights in $\theta$ on $[0, 2\pi]$\;
\hspace{2em} For each $\theta_i$:\;
\hspace{4em} Compute eigendecomposition (\ref{340b})\;
\hspace{4em} Evaluate upper integration bound in $\phi$\;
\hspace{4em} Construct $\phi_1,...,\phi_m$, Gaussian nodes in $\phi$\;
\hspace{4em} For each $\phi_{\ell}$:\;
\hspace{6em} Evaluate integration bounds for $\rho$ integral\;
\hspace{6em} Compute integral using Gaussian nodes\;
\hspace{6em} Evaluate $\tilde{M}_j(\phi_{\ell}, \theta_i)$ of (\ref{817}) for $j=1,...,k$\;
\hspace{4em} Convert expectations back to $\beta$\;
 \caption{\em Two-group normal-normal models: Evaluation of posterior expectations.}
\end{algorithm}

\section{Special case of two-group model} \label{s101}
In this section we consider the special case where the posterior density $f$ corresponds to an intercept model and one of the two groups is much smaller than the other.  That is, $k_1$ is small relative to $k_2$, and each row of $X$ contains two non-zero entries---one in the first $k_1$ columns and another in the next $k_2$ columns. This model appears in applications in which we seek to model an outcome, $y$, as a combination of two unrelated factors. 

For the corresponding posterior, we introduce a fast algorithm that analytically marginalizes the normal-normal parameters $\beta$ by evaluating a determinant and solving a linear system.

The terms dependent on $\beta$ of the exponential of $f$ can be written as 
\begin{align*}
\frac{1}{\sigma_1^2} (\beta-\beta_0)^t X^t X (\beta-\beta_0) + \beta^t R \beta = (\beta-\bar{\beta}(\sigma))^t \left(\frac{1}{\sigma_1^2}X^tX +R\right) (\beta-\bar{\beta}(\sigma)) + C(\sigma),
\end{align*}
where 
\begin{align}\label{340}
\sigma_1^2\left(\frac{1}{\sigma_1^2}X^tX +R\right)\bar{\beta}(\sigma) = X^tX \beta_0= X^t y
\end{align}
and
\begin{align*}
C(\sigma) &= \frac{1}{\sigma_1^2}(\beta_0 - \bar{\beta}(\sigma))^t X^t X \beta_0 \\
&= \frac{1}{\sigma_1^2}(\beta_0-\bar{\beta}(\sigma))^t X^t y\\
&= \frac{1}{\sigma_1^2}(y-y_r-X \bar{\beta}(\sigma))^t  y\\
&= \frac{1}{\sigma_1^2}(y_p-X \bar{\beta}(\sigma))^t  y.
\end{align*}

Before proceeding further, we introduce a change of variables. Let 
$$(\sigma_1,\sigma_2,\sigma_3) = \rho (\cos\theta,\nu_2,\sin\theta)$$
and note that
$${\rm d} \sigma_1\,{\rm d} \sigma_2 \,{\rm d}\sigma_3 = \rho^2 {\rm d}\rho\,{\rm d} \nu_2 \,{\rm d}\theta.$$
Then $R$ of (\ref{340}) satisfies
\begin{align*}
R = \frac{1}{\rho^2}\begin{bmatrix} \frac{1}{\nu_2^2} I_{k_1} & 0 \\
0 & \frac{1}{\sin^2\theta} I_{k_2}
\end{bmatrix}.
\end{align*}
Let $\tilde{R}= \rho^2\cos^2{\theta} \, R$; this is a function only of $\nu_2$ and $\theta.$ It follows that
\begin{align*}
f(\beta,\rho,\nu_2,\theta) = \frac{1}{\rho^{n+k} \nu_2^{k_1} \cos^n\theta\, \sin^{k_2}\theta} e^{- \frac{\rho^2}{2} \left(1+ \nu_2^2 \right)- \frac{1}{2\rho^2\cos^2\theta}(\beta-\bar{\beta})^t \left(X^tX +\tilde{R}\right) (\beta-\bar{\beta}) - \frac{1}{2\rho^2 \cos^2\theta}\left[y_p^ty -\bar{\beta} X^t y +\|y_r\|^2\right]}.
\end{align*}
For ease of exposition, let $\beta$ be the function defined by
$$\beta(\nu_2,\theta) = \frac{1}{\cos^2\theta} \left[y_p^ty -\bar{\beta} X^t y +\|y_r\|^2\right].$$
The following lemma will be used in Theorem \ref{331} and will provide formulas for computing posterior moments of $f$. 
\begin{lemma}[moments of a Gaussian]
Let $C$ be a symmetric positive definite $k \times k$ matrix. Then, for any vector $\tilde{\beta} \in \mathbb{R}^k,$
\begin{enumerate}
\item
\begin{align*}
\int_{\mathbb{R}^k} e^{-(\beta-\tilde{\beta}) C (\beta-\tilde{\beta})} \,{\rm d}\beta = \frac{\pi^\frac{k}{2}}{\sqrt{\det{C}}}
\end{align*}
\item
\begin{align*}
\int_{\mathbb{R}^k} \beta\, e^{-(\beta-\tilde{\beta}) C (\beta-\tilde{\beta})} \,{\rm d}\beta = \frac{\pi^\frac{k}{2}}{\sqrt{\det{C}}} \,\tilde{\beta}
\end{align*}
\item
\begin{align*}
\int_{\mathbb{R}^k} \beta \beta^t\, e^{-(\beta-\tilde{\beta}) C (\beta-\tilde{\beta})} \,{\rm d}\beta = \frac{\pi^\frac{k}{2}}{\sqrt{\det{C}}} (\frac{1}{2}C^{-1}\,+\,\tilde{\beta}\tilde{\beta}^t)
\end{align*}
\end{enumerate}
\end{lemma}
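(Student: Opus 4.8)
The plan is to reduce all three integrals to products of one-dimensional Gaussian integrals by orthogonally diagonalizing $C$. Since $C$ is symmetric positive definite, I would write $C = U\Lambda U^t$ with $U$ orthogonal and $\Lambda = \mathrm{diag}(\lambda_1,\dots,\lambda_k)$, $\lambda_i > 0$. (I read the exponent $(\beta-\tilde{\beta})C(\beta-\tilde{\beta})$ as the quadratic form $(\beta-\tilde{\beta})^t C(\beta-\tilde{\beta})$.) Then I substitute $u = U^t(\beta - \tilde{\beta})$, equivalently $\beta = \tilde{\beta} + Uu$; because $U$ is orthogonal the Jacobian is $1$ and the quadratic form becomes $\sum_{i=1}^k \lambda_i u_i^2$, which decouples the integrand into a product over coordinates. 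All integrals converge absolutely since each $\lambda_i > 0$, which legitimizes the manipulations below.

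For part (1), I factor $\int_{\mathbb{R}^k} e^{-\sum_i \lambda_i u_i^2}\,{\rm d}u = \prod_{i=1}^k \int_{\mathbb{R}} e^{-\lambda_i u_i^2}\,{\rm d}u_i = \prod_{i=1}^k \sqrt{\pi/\lambda_i} = \pi^{k/2}(\prod_i \lambda_i)^{-1/2} = \pi^{k/2}/\sqrt{\det C}$, using $\det C = \prod_i \lambda_i$. This establishes the normalizing constant and supplies the building blocks for the remaining two parts.

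For part (2), after substitution the integrand is $(\tilde{\beta} + Uu)\,e^{-\sum_i \lambda_i u_i^2}$. The $\tilde{\beta}$ piece pulls out and is multiplied by the scalar from part (1), while the $Uu$ piece vanishes because each $u_i$ is odd against the even weight, yielding $\pi^{k/2}\tilde{\beta}/\sqrt{\det C}$. For part (3), I expand $(\tilde{\beta} + Uu)(\tilde{\beta}+Uu)^t$ into four terms: the two cross terms are odd in $u$ and integrate to zero, the $\tilde{\beta}\tilde{\beta}^t$ term reproduces part (1) times $\tilde{\beta}\tilde{\beta}^t$, and the remaining term is $U\big(\int uu^t e^{-\sum_i \lambda_i u_i^2}\,{\rm d}u\big)U^t$. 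The key computation is the coordinatewise second moment $\int_{\mathbb{R}} u_i^2 e^{-\lambda_i u_i^2}\,{\rm d}u_i = \tfrac{1}{2\lambda_i}\sqrt{\pi/\lambda_i}$, which together with the vanishing off-diagonal entries gives $\int uu^t e^{-\sum_i \lambda_i u_i^2}\,{\rm d}u = \tfrac{1}{2}\Lambda^{-1}\,\pi^{k/2}/\sqrt{\det C}$.

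There is no serious obstacle here; the content is the standard Gaussian moment calculation. The only step requiring care is the bookkeeping in part (3): recognizing that the coordinatewise second moments assemble into $\tfrac12\Lambda^{-1}$ and that the orthogonal conjugation $U(\tfrac12\Lambda^{-1})U^t = \tfrac12 U\Lambda^{-1}U^t$ collapses to exactly $\tfrac12 C^{-1}$ rather than some other function of $C$, which produces the stated answer $\tfrac{\pi^{k/2}}{\sqrt{\det C}}\left(\tfrac12 C^{-1} + \tilde{\beta}\tilde{\beta}^t\right)$.
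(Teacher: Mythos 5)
Your proof is correct and takes essentially the same route as the paper's: a linear change of variables diagonalizing the quadratic form, odd-symmetry cancellation of the linear and cross terms, and one-dimensional Gaussian second moments. The only cosmetic difference is that the paper substitutes $z = \sqrt{C}\,(\beta - \tilde{\beta})$, whitening to the isotropic Gaussian $e^{-\|z\|^2}$ whose second moment is $\tfrac{\pi^{k/2}}{2}I$ and then conjugating by $\sqrt{C}^{-1}$, whereas you rotate by $U^t$ and keep the anisotropic weights $\lambda_i$, assembling $\tfrac{1}{2}\Lambda^{-1}$ and conjugating by $U$; both collapse to $\tfrac{1}{2}C^{-1}$, and your writeup additionally spells out parts (1) and (2), which the paper dismisses as immediate.
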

\begin{proof}
The proof of the first two is immediate. For the second, let $z = \sqrt{C} (\beta-\tilde{\beta}).$ Then, the integral becomes
\begin{align*}
\int_{\mathbb{R}^k}& \left[\sqrt{C}^{-1}z +\tilde{\beta} \right]\,\left[\sqrt{C}^{-1}z +\tilde{\beta} \right]^t e^{-z^2}\, \frac{1}{\det{\sqrt{C}}} \,{\rm d}z\\
&=\int_{\mathbb{R}^k} \left(\sqrt{C}^{-1} z z^t \sqrt{C}^{-1} + \tilde{\beta} \tilde{\beta}^t \right) e^{-z^2}\, \frac{1}{\det{\sqrt{C}}} \,{\rm d}z\\
&= \frac{\pi^\frac{k}{2}}{\sqrt{\det{C}}} \left(\frac{1}{2} C^{-1} +\tilde{\beta} \tilde{\beta}^t \right).
\end{align*}
\end{proof}
The following theorem follows immediately from the previous lemma and provides formulas that will be used to compute posterior moments of $f$.
\begin{theorem}\label{331}
Let $f$ be the unnormalized probability density defined above. Then
\begin{enumerate}
\item
\begin{align*}
f_0 &:= \int_{\mathbb{R}^+}\int_{\mathbb{R}^+}\int_{\mathbb{R}^+}\int_{\mathbb{R}^k} f(\beta,\sigma)\,{\rm d}\beta\,{\rm d}\sigma\\
&= (2\pi)^\frac{k}{2}\int_{\mathbb{R}^+}\frac{1}{\cos^{n-k}\theta\,\sin^{k_2}\theta}\, \int_{\mathbb{R}^+}\frac{1}{\nu_2^{k_1}}\frac{1}{\sqrt{\det{X^tX+\tilde{R}}}}\int_{\mathbb{R}^+} \frac{e^{-\frac{\rho^2(1+\nu_2^2)}{2}- \frac{1}{2\rho^2}\beta}}{\rho^{n-2}}\,{\rm d} \rho \,{\rm d}\nu_2\,{\rm d}\theta
\end{align*}
\item
\begin{align*}
f_j &:= \int_{\mathbb{R}^+}\int_{\mathbb{R}^+}\int_{\mathbb{R}^+}\int_{\mathbb{R}^k} \beta_jf(\beta,\sigma)\,{\rm d}\beta\,{\rm d}\sigma\\
&= (2\pi)^\frac{k}{2}\int_{\mathbb{R}^+}\frac{1}{\cos^{n-k}\theta\,\sin^{k_2}\theta}\, \int_{\mathbb{R}^+}\frac{1}{\nu_2^{k_1}}\frac{\bar{\beta}_j(\nu_2,\theta)}{\sqrt{\det{X^tX+\tilde{R}}}}\int_{\mathbb{R}^+} \frac{e^{-\frac{\rho^2(1+\nu_2^2)}{2}- \frac{1}{2\rho^2}\beta}}{\rho^{n-2}}\,{\rm d} \rho \,{\rm d}\nu_2\,{\rm d}\theta
\end{align*}
\item
\begin{align*}
f_{jk} &:= \int_{\mathbb{R}^+}\int_{\mathbb{R}^+}\int_{\mathbb{R}^+}\int_{\mathbb{R}^k} \beta_j\beta_kf(\beta,\sigma)\,{\rm d}\beta\,{\rm d}\sigma\\
&= (2\pi)^\frac{k}{2}\int_{\mathbb{R}^+}\frac{1}{\cos^{n-k}\theta\,\sin^{k_2}\theta}\, \int_{\mathbb{R}^+}\frac{1}{\nu_2^{k_1}}\frac{\frac{1}{2}(X^tX+\tilde{R})^{-1}_{jk} +\bar{\beta}_j\bar{\beta}_k }{\sqrt{\det{X^tX+\tilde{R}}}}\int_{\mathbb{R}^+} \frac{e^{-\frac{\rho^2(1+\nu_2^2)}{2}- \frac{1}{2\rho^2}\beta}}{\rho^{n-2}}\,{\rm d} \rho \,{\rm d}\nu_2\,{\rm d}\theta.
\end{align*}
\end{enumerate}
\end{theorem}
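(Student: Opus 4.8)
The plan is to prove all three formulas by a single mechanism: for fixed scale parameters $(\rho,\nu_2,\theta)$, carry out the $\beta$-integral over $\mathbb{R}^k$ in closed form using the preceding lemma on Gaussian moments, and then recognize the remainder as the stated iterated integral in $(\rho,\nu_2,\theta)$. First I would apply Fubini's theorem together with the change of variables $(\sigma_1,\sigma_2,\sigma_3)=\rho(\cos\theta,\nu_2,\sin\theta)$, whose Jacobian is $\rho^2\,{\rm d}\rho\,{\rm d}\nu_2\,{\rm d}\theta$, so that each of $f_0$, $f_j$, $f_{jk}$ becomes $\rho^2\,{\rm d}\rho\,{\rm d}\nu_2\,{\rm d}\theta$ against the inner integral $\int_{\mathbb{R}^k}(1,\beta_j,\beta_j\beta_k)\,f(\beta,\rho,\nu_2,\theta)\,{\rm d}\beta$. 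Reading off the displayed form of $f(\beta,\rho,\nu_2,\theta)$, the only $\beta$-dependence lies in the Gaussian factor $\exp[-(\beta-\bar\beta)^t C(\beta-\bar\beta)]$ with $C=\frac{1}{2\rho^2\cos^2\theta}(X^tX+\tilde R)$ and $\tilde\beta=\bar\beta(\nu_2,\theta)$, so the three inner integrals are precisely the three cases of the lemma.

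Next I would compute the common normalizing factor. Since $\det C=(2\rho^2\cos^2\theta)^{-k}\det(X^tX+\tilde R)$, the lemma's prefactor becomes $\pi^{k/2}/\sqrt{\det C}=(2\pi)^{k/2}\,\rho^k\cos^k\theta/\sqrt{\det(X^tX+\tilde R)}$, using $\cos\theta>0$ on $(0,\pi/2)$. Multiplying against the scalar prefactor $\rho^{-(n+k)}\nu_2^{-k_1}\cos^{-n}\theta\,\sin^{-k_2}\theta$ and the Jacobian $\rho^2$, the powers collapse as $\rho^{-(n+k)+k+2}=\rho^{-(n-2)}$ and $\cos^{-n+k}\theta=\cos^{-(n-k)}\theta$, while $\nu_2^{-k_1}$ and $\sin^{-k_2}\theta$ pass through unchanged. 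The two exponential factors then combine, using the definition $\beta(\nu_2,\theta)=\frac{1}{\cos^2\theta}[y_p^ty-\bar\beta X^ty+\|y_r\|^2]$, into $e^{-\rho^2(1+\nu_2^2)/2-\beta/(2\rho^2)}$. This reproduces $f_0$ verbatim, and for $f_j$ the second part of the lemma simply inserts the extra scalar $\bar\beta_j(\nu_2,\theta)$ (which, being determined by $(X^tX+\tilde R)\bar\beta=X^ty$, is independent of $\rho$ and so factors out of the inner integral), yielding the second formula.

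The second-moment formula is where I expect the genuine bookkeeping to live, and it is the step I would check most carefully. The third part of the lemma produces $\frac12 C^{-1}+\bar\beta\bar\beta^t$, and here $C^{-1}=2\rho^2\cos^2\theta\,(X^tX+\tilde R)^{-1}$, so the covariance contribution is $\frac12 C^{-1}_{jk}=\rho^2\cos^2\theta\,(X^tX+\tilde R)^{-1}_{jk}$, carrying its own factor $\rho^2\cos^2\theta$. Relative to the $\bar\beta_j\bar\beta_k$ term this shifts the effective weight for the $(X^tX+\tilde R)^{-1}$ piece (turning $\rho^{-(n-2)}\cos^{-(n-k)}\theta$ into $\rho^{-(n-4)}\cos^{-(n-k-2)}\theta$), so reconciling this with the single integrand $\frac12(X^tX+\tilde R)^{-1}_{jk}+\bar\beta_j\bar\beta_k$ written over the common weight is the delicate point; I would verify the exact placement of these $\rho^2\cos^2\theta$ factors and confirm the stated form of the integrand. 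Once this is settled, the proof is complete, since the remaining integrations over $\rho$, $\nu_2$, and $\theta$ are left intact as the iterated integrals displayed in the statement.
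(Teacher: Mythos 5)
Your proposal takes the same route as the paper's own proof: the paper simply asserts that the theorem ``follows immediately from the previous lemma,'' which is exactly the computation you carry out---change of variables to $(\rho,\nu_2,\theta)$, Fubini, then the Gaussian-moment lemma applied with $C=\frac{1}{2\rho^2\cos^2\theta}(X^tX+\tilde R)$ and mean $\bar\beta(\nu_2,\theta)$. Your verification of parts 1 and 2, including the collapse of the powers of $\rho$ and $\cos\theta$, is complete and reproduces the printed formulas.

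The ``delicate point'' you flag in part 3 is not a gap in your argument but a genuine inconsistency in the paper's statement. As you compute, $\frac12 C^{-1}_{jk}=\rho^2\cos^2\theta\,(X^tX+\tilde R)^{-1}_{jk}$, and since $\tilde R$ depends only on $(\nu_2,\theta)$, this factor cannot be absorbed elsewhere; the covariance contribution must sit inside the $\rho$-integral with weight $\rho^{-(n-4)}$ (equivalently, outer weight $\cos^{-(n-k-2)}\theta$ for that term), not under the common weight $\rho^{-(n-2)}$ shown. In other words, part 3 as printed omits the factor $2\rho^2\cos^2\theta$ multiplying $\frac12(X^tX+\tilde R)^{-1}_{jk}$; the formula consistent with the lemma is
\begin{align*}
f_{jk} = (2\pi)^{\frac k2}\int_{\mathbb{R}^+}\frac{1}{\cos^{n-k}\theta\,\sin^{k_2}\theta}\int_{\mathbb{R}^+}\frac{1}{\nu_2^{k_1}\sqrt{\det(X^tX+\tilde R)}}\int_{\mathbb{R}^+}\Bigl(\rho^2\cos^2\theta\,(X^tX+\tilde R)^{-1}_{jk}+\bar\beta_j\bar\beta_k\Bigr)\,\frac{e^{-\frac{\rho^2(1+\nu_2^2)}{2}-\frac{1}{2\rho^2}\beta}}{\rho^{n-2}}\,{\rm d}\rho\,{\rm d}\nu_2\,{\rm d}\theta,
\end{align*}
which is what your bookkeeping yields. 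So your proof is correct, follows the paper's approach, and in addition correctly identifies an error in the stated second-moment formula.
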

\subsection{Numerical apparatus}
In this section, we describe a numerical method for exploiting the structure of $X^tX$ and $\tilde{R}$ to obtain a computationally efficient strategy for computing moments of $f.$ 

We use the following lemma to compute the determinant found in each integral of Theorem \ref{331}. The lemma follows immediately from the Schur complement formula  \citep{nla}.
\begin{lemma}
Let $X^tX$ be the matrix given by
\begin{align*}
X^t X = \begin{bmatrix}
D_1 & B^t\\
B& D_2
\end{bmatrix},
\end{align*}
and $\tilde{R}$ be the matrix given by
\begin{align*}
\tilde{R} = \begin{bmatrix} \frac{\cos^2\theta}{\nu_2^2} I_{k_1} & 0 \\
0 & \frac{1}{\tan^2\theta} I_{k_2}
\end{bmatrix}.
\end{align*}
Then, 
\begin{align*}
\det{X^tX +\tilde{R}} = \det{\left(D_2+\frac{1}{\tan^2\theta}\right)}\, \det \left(D_1 + \frac{\cos^2\theta}{\nu_2^2} -B^t \left(D_2+\frac{1}{\tan^2\theta}\right)^{-1} B \right).
\end{align*}
\end{lemma}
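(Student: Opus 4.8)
The plan is to establish the block-determinant identity via the Schur complement formula. First I would partition the $k \times k$ matrix $X^tX + \tilde{R}$ into the same $k_1$ and $k_2$ blocks used for $X^tX$ and $\tilde{R}$. Since $\tilde{R}$ is block-diagonal with $\frac{\cos^2\theta}{\nu_2^2} I_{k_1}$ in the upper-left block and $\frac{1}{\tan^2\theta} I_{k_2}$ in the lower-right block, adding it to $X^tX = \begin{bmatrix} D_1 & B^t \\ B & D_2 \end{bmatrix}$ leaves the off-diagonal blocks unchanged and shifts only the diagonal blocks. Thus I would write
\begin{align*}
X^tX + \tilde{R} = \begin{bmatrix} D_1 + \frac{\cos^2\theta}{\nu_2^2} I_{k_1} & B^t \\ B & D_2 + \frac{1}{\tan^2\theta} I_{k_2} \end{bmatrix}.
\end{align*}

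Next I would invoke the standard Schur complement determinant formula: for a block matrix $\begin{bmatrix} A & B^t \\ B & E \end{bmatrix}$ with $E$ invertible, the determinant factors as $\det(E) \det(A - B^t E^{-1} B)$. Here I would take $A = D_1 + \frac{\cos^2\theta}{\nu_2^2} I_{k_1}$ and $E = D_2 + \frac{1}{\tan^2\theta} I_{k_2}$. Substituting these into the formula yields exactly the claimed expression. The one preliminary check is that $E = D_2 + \frac{1}{\tan^2\theta} I_{k_2}$ is invertible so that the Schur complement is well-defined; this follows because $D_2$ is a principal submatrix of the positive semidefinite Gram matrix $X^tX$, hence positive semidefinite, and the added multiple of the identity (positive for $0 < \theta < \pi/2$) makes the sum strictly positive definite and therefore invertible.

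Since the claim is asserted to follow immediately from the Schur complement formula, there is essentially no substantive obstacle here; the content is purely the bookkeeping of matching the blocks. The only mild subtlety worth flagging is notational: the statement writes the scalar shifts $D_2 + \frac{1}{\tan^2\theta}$ and $D_1 + \frac{\cos^2\theta}{\nu_2^2}$ without the explicit identity factors, so I would note that these are to be read as $D_2 + \frac{1}{\tan^2\theta} I_{k_2}$ and $D_1 + \frac{\cos^2\theta}{\nu_2^2} I_{k_1}$ respectively, matching the block dimensions. With that convention the two sides agree identically, completing the argument.
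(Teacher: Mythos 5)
Your proof is correct and takes essentially the same approach as the paper, which simply asserts that the lemma ``follows immediately from the Schur complement formula.'' Your added check that $D_2 + \frac{1}{\tan^2\theta} I_{k_2}$ is invertible---since $D_2$ is a principal submatrix of the positive semidefinite Gram matrix $X^tX$ and the added scalar is positive for $0 < \theta < \pi/2$---is a detail the paper leaves implicit, but it is the same argument.
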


In the following lemma, which follows immediately from elementary linear algebra operations, we show that $X^tX + \tilde{R}$ can be written as $D + UU^t$ where $U$ is a $k \times 2k_1$ matrix and $D$ is diagonal. This will be used to solve the linear system (\ref{340}) via the Sherman-Morrison-Woodbury formula.

\begin{lemma}
Suppose $X^tX$ and $\tilde{R}$ are the same matrices as defined previously. Then
\begin{align*}
X^t X +\tilde{R} = \begin{bmatrix} D_1+\frac{\cos^2\theta}{\nu_2^2} & 0 \\ 0 &D_2+\frac{1}{\tan^2\theta} \end{bmatrix} + \begin{bmatrix}I_{k_1} &0 \\ 0 & B \end{bmatrix} \begin{bmatrix}0&I_{k_1} \\ B & 0 \end{bmatrix}^t.
\end{align*}
Moreover,
\begin{align*}
\begin{bmatrix}0&I_{k_1} \\ B & 0 \end{bmatrix}^t \begin{bmatrix} \left(D_1+\frac{\cos^2\theta}{\nu_2^2} \right)^{-1} & 0 \\ 0 &\left(D_2+\frac{1}{\tan^2\theta} \right)^{-1} \end{bmatrix}\begin{bmatrix}I_{k_1} &0 \\ 0 & B \end{bmatrix}= \begin{bmatrix} 0&  B^t\left(D_2+\frac{1}{\tan^2\theta} \right)^{-1}B\\ \left(D_1+\frac{\cos^2\theta}{\nu_2^2} \right)^{-1} & 0\\ \end{bmatrix}
\end{align*}
\end{lemma}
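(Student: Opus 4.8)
The plan is to verify both identities by direct block-matrix multiplication, partitioning every factor into $k_1$- and $k_2$-sized blocks and keeping track of the resulting dimensions. Throughout I would write $U = \begin{bmatrix} I_{k_1} & 0 \\ 0 & B \end{bmatrix}$ and $V = \begin{bmatrix} 0 & I_{k_1} \\ B & 0 \end{bmatrix}$, both of which are $k \times 2k_1$ once one records that $B$ is $k_2 \times k_1$, and I would abbreviate the two diagonal blocks of the diagonal term by $E_1 = D_1 + \frac{\cos^2\theta}{\nu_2^2} I_{k_1}$ and $E_2 = D_2 + \frac{1}{\tan^2\theta} I_{k_2}$.

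For the first identity I would compute the outer product $UV^t$ blockwise. Since $V^t = \begin{bmatrix} 0 & B^t \\ I_{k_1} & 0 \end{bmatrix}$, multiplying the $(k_1,k_2)$ row partition of $U$ against the $(k_1,k_1)$ column partition of $V^t$ yields $UV^t = \begin{bmatrix} 0 & B^t \\ B & 0 \end{bmatrix}$; in particular the two diagonal blocks vanish while the off-diagonal blocks reproduce exactly the off-diagonal part of $X^tX$. Adding the block-diagonal matrix with diagonal blocks $E_1, E_2$ then recovers $\begin{bmatrix} E_1 & B^t \\ B & E_2 \end{bmatrix} = X^tX + \tilde{R}$, which is the claim. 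I would also remark that although the two factors $U$ and $V$ differ, the product $UV^t$ comes out symmetric, which is precisely what legitimizes treating the low-rank correction as a symmetric update in the subsequent Sherman--Morrison--Woodbury step.

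For the second identity I would insert the block-diagonal inverse $D^{-1}$, with diagonal blocks $E_1^{-1}$ and $E_2^{-1}$, between $V^t$ and $U$ and again multiply blockwise. First $D^{-1}U = \begin{bmatrix} E_1^{-1} & 0 \\ 0 & E_2^{-1}B \end{bmatrix}$, and then left-multiplying by $V^t$ produces the $(1,2)$ block $B^t E_2^{-1} B$ and the $(2,1)$ block $E_1^{-1}$, with both diagonal blocks zero. This is exactly the stated right-hand side, with $E_1^{-1} = \left(D_1 + \frac{\cos^2\theta}{\nu_2^2}\right)^{-1}$ and $E_2^{-1} = \left(D_2 + \frac{1}{\tan^2\theta}\right)^{-1}$.

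The computation is entirely routine, so the only genuine care required---and the one place an error could slip in---is the bookkeeping: keeping straight which of $U$, $V$ is transposed, respecting that $B$ is $k_2 \times k_1$ rather than square, and confirming that each claimed block has the dimensions forced by the $(k_1,k_2)$ and $(k_1,k_1)$ partitions so that every block product is actually defined. Once the dimensions are pinned down, both identities follow immediately from the definitions of matrix multiplication and transposition, consistent with the paper's assertion that the lemma is a consequence of elementary linear algebra.
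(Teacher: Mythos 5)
Your proof is correct and takes the same route the paper intends: the paper offers no explicit proof, asserting only that the lemma "follows immediately from elementary linear algebra operations," and your blockwise multiplication with careful tracking of the $(k_1,k_2)$ partitions and the $k\times 2k_1$ shapes of the two factors is exactly that verification. Both identities check out as you computed them, so nothing further is needed.
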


The following theorem allows us to efficiently solve the linear system (\ref{340}) and follows immediately from the combination of the previous lemma with the Sherman-Morrison-Woodbury formula \citep{nla}.

\begin{theorem}
Let $X^tX$ and $\tilde{R}$ be as defined above. Let $W(\nu_3)$ be the $k_1 \times k_1$ matrix defined by
$$W(\nu_3) = B^t \left(D_2 + \frac{1}{\tan^2\theta}\right)^{-1} B.$$
Then
\begin{align*}
 = -\begin{bmatrix} P_1 & 0 \\ 0 &P_2 \end{bmatrix} \begin{bmatrix} I_{k_1} &0\\ 0 &B
 \end{bmatrix} \begin{bmatrix} I & W\\P_1 &I \end{bmatrix}^{-1}\begin{bmatrix}0&I_{k_1} \\ B & 0 \end{bmatrix}^t \begin{bmatrix} P_1 & 0 \\ 0 &P_2 \end{bmatrix}, 
\end{align*}
where $P_1 = (D_1 + \frac{\cos^2\theta}{\nu_2^2})^{-1},$ and $P_2 = (D_2 + \frac{1}{\tan^2\theta})^{-1}.$ Moreover, 
\begin{align*}
\begin{bmatrix} I & W\\P_1 &I \end{bmatrix}^{-1} = \begin{bmatrix} (I-P_1W)^{-1} & -W (I-P_1W)^{-t}\\ -P (I-P_1W)^{-1} & (I-P_1W)^{-t}\end{bmatrix}.
\end{align*}
\end{theorem}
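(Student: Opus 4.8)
The plan is to read the (partly abbreviated) statement as a formula for $(X^tX+\tilde{R})^{-1}$, or more precisely for its correction to the diagonal part, obtained by feeding the decomposition of the preceding lemma into the Sherman--Morrison--Woodbury identity and then inverting a single small block matrix. Writing $D=\mathrm{diag}\!\left(D_1+\tfrac{\cos^2\theta}{\nu_2^2},\,D_2+\tfrac{1}{\tan^2\theta}\right)$, $U=\begin{bmatrix}I_{k_1}&0\\0&B\end{bmatrix}$, and $V=\begin{bmatrix}0&I_{k_1}\\B&0\end{bmatrix}$, the preceding lemma already supplies $X^tX+\tilde{R}=D+UV^t$ with $D$ diagonal and $U,V\in\mathbb{R}^{k\times 2k_1}$, and its ``Moreover'' clause supplies the product $V^tD^{-1}U=\begin{bmatrix}0&W\\P_1&0\end{bmatrix}$ with $W=B^t(D_2+\tfrac{1}{\tan^2\theta})^{-1}B$.

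First I would invoke the Woodbury form
$$(D+UV^t)^{-1}=D^{-1}-D^{-1}U\,(I_{2k_1}+V^tD^{-1}U)^{-1}\,V^tD^{-1},$$
which is legitimate because $D$ is positive diagonal, hence invertible, and the capacitance matrix $I_{2k_1}+V^tD^{-1}U$ is nonsingular: by Sylvester's identity $\det(I_{2k_1}+V^tD^{-1}U)=\det(D+UV^t)/\det(D)$, and $D+UV^t=X^tX+\tilde{R}$ is positive definite. Noting $D^{-1}=\mathrm{diag}(P_1,P_2)$, the correction term $-D^{-1}U(\cdots)^{-1}V^tD^{-1}$ is exactly the right-hand side displayed in the theorem, so the missing left-hand side of the first display is $(X^tX+\tilde{R})^{-1}-\mathrm{diag}(P_1,P_2)$.

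Next I would substitute the ``Moreover'' identity to identify the capacitance matrix as $I_{2k_1}+V^tD^{-1}U=\begin{bmatrix}I&W\\P_1&I\end{bmatrix}$, which is precisely the block matrix whose inverse is asserted in the second display. To prove that second formula I would invert this $2\times2$ block matrix directly: solving the four block equations arising from $\begin{bmatrix}I&W\\P_1&I\end{bmatrix}\begin{bmatrix}E&F\\G&H\end{bmatrix}=I$ gives $E=(I-WP_1)^{-1}$, $H=(I-P_1W)^{-1}$, $G=-P_1E$, and $F=-WH$, i.e.\ the Schur-complement inverse with the two pivots $I-WP_1$ and $I-P_1W$.

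The computation is mechanical, so there is no deep obstacle; the only point that demands care is the bookkeeping of the two distinct Schur complements $I-WP_1$ and $I-P_1W$. Since $W=B^t(D_2+\tfrac{1}{\tan^2\theta})^{-1}B$ and $P_1$ are both symmetric, these pivots are transposes of one another, $I-WP_1=(I-P_1W)^t$, which is exactly the origin of the transposed inverses $(I-P_1W)^{-t}$ appearing in the stated formula, and I would track explicitly which pivot multiplies on the left versus the right so as to place each $(\cdot)^{-1}$ and $(\cdot)^{-t}$ consistently. The point worth emphasizing is structural rather than analytic: the Woodbury step replaces a $k\times k$ solve of (\ref{340}) by inversion of the $2k_1\times 2k_1$ capacitance matrix, which is the efficiency gain exploited in the regime $k_1\ll k_2$.
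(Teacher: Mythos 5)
Your proposal is correct, and it follows essentially the route the paper intends: the paper offers no explicit proof, stating only that the theorem ``follows immediately from the combination of the previous lemma with the Sherman--Morrison--Woodbury formula,'' which is precisely your argument --- write $X^tX+\tilde{R}=D+UV^t$ using the preceding lemma, apply Woodbury, and identify the capacitance matrix $I_{2k_1}+V^tD^{-1}U=\begin{bmatrix} I & W\\ P_1 & I\end{bmatrix}$ via the lemma's ``Moreover'' clause. Your reading of the truncated left-hand side as $(X^tX+\tilde{R})^{-1}-\mathrm{diag}(P_1,P_2)$ is the right reconstruction. One substantive point deserves emphasis: your block-inverse formula \emph{disagrees} with the paper's second display, and yours is the correct one. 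Solving the block equations as you do gives
\begin{align*}
\begin{bmatrix} I & W\\ P_1 & I \end{bmatrix}^{-1} = \begin{bmatrix} (I-WP_1)^{-1} & -W(I-P_1W)^{-1}\\ -P_1(I-WP_1)^{-1} & (I-P_1W)^{-1}\end{bmatrix}
= \begin{bmatrix} (I-P_1W)^{-t} & -W(I-P_1W)^{-1}\\ -P_1(I-P_1W)^{-t} & (I-P_1W)^{-1}\end{bmatrix},
\end{align*}
whereas the paper's display places $(I-P_1W)^{-1}$ in the top-left and $(I-P_1W)^{-t}$ in the bottom-right, i.e.\ it exchanges the two pivots (and writes $P$ for $P_1$). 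That version cannot be right in general: multiplying it on the left by $\begin{bmatrix} I & W\\ P_1 & I\end{bmatrix}$ produces $(I-WP_1)(I-P_1W)^{-1}$ in the top-left block, which equals $I$ only when $P_1$ and $W$ commute. So your careful bookkeeping of which Schur complement sits on which side is not merely pedantic --- it catches a transposition error in the paper's statement.
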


Using the preceding theorems and the quadrature rule described in Section \ref{s60}, we compute posterior moments of the special case of $f$ described in this section.

\section{Mixed effects}\label{s102}
From a computational standpoint, the mixed effects model is nearly identical to the two-group normal-normal model, however they differ in one key respect. 
In the two-group normal-normal model (see (\ref{16})), the scale parameters $\sigma_1$ and 
$\sigma_2$ are treated as unknowns that are fit to the data and assigned priors---they're treated as modeled coefficients, also called ``random effects.'' 
In the mixed effects model, $\sigma_1$ is still a random effect, however instead
of treating $\sigma_2$ as a random effect, 
each regression coefficient in the second group of predictors, $\beta_{2,i}$ is given a
normal prior with fixed scale parameter. 
The corresponding Bayesian model is
\bb
\begin{split}
&y \sim \text{normal}(X_1\beta_1 + X_2\beta_2, \sigma_3) \\
&\beta_1 \sim \text{normal}(0, \sigma_1)\\
&\beta_{2,i} \sim \text{normal}(0, \sigma_{2,i}),
\end{split}
\ee
where $\sigma_{2,i}$ is the fixed scale parameter prior on each regression coefficient $\beta_{2,i}$ for $i=1,...,k_2$ where $\beta_2 \in \R^{k_2}$.
For the purposes of demonstrating the algorithm in this paper, we assign the priors
\begin{align*}
& \sigma_1 \sim \text{normal}^+(0, 1) \\
& \sigma_3 \sim \text{normal}^+(0, 1).
\end{align*}
The choice of priors on $\sigma_1$ and $\sigma_3$ is somewhat arbitrary. The algorithm described allows for general choices for these priors.  
The unnormalized posterior density for the mixed effects model, $f: \R^{k+2} \to \R$,
is given by
\begin{align*}
f(\beta,\sigma_1,\sigma_2) = \frac{e^{-\sigma_1^2/2-\sigma_2^2/2}}{\sigma_1^n \sigma_2^{k_1}} e^{-\frac{1}{2\sigma_1^2} \|X \beta - y\|^2}e^{-\frac{1}{2\sigma_2^2}\|\beta_1\|^2}e^{- \sum_{i=1}^{k_2} \frac{\beta_{2, i}^2}{2\sigma_{3, i}}},
\end{align*}
where $\sigma_3 \in \R^{k_2}$ is a vector of fixed scale parameter priors for the regression
coefficients $\beta_2 \in \R^{k_2}$. 
Now our density is over $k+2$ dimensions---the $k$ regression coefficients, $\beta = (\beta_1, \beta_2)$, and 
the two scale parameters, $\sigma_1$ and $\sigma_2$. 

With a change of variables we convert the posterior density, $f$,
to the posterior density of a two-group normal-normal model where one 
scale parameter is fixed. That is, we now convert $f(\beta, \sigma_1, \sigma_2)$
to $q(\beta, \sigma_1, \sigma_2, 1)$ where $q$ is the posterior density of 
a two-group normal-normal model (see (\ref{10})). 

We first scale the last $k_2$ columns of $X$ by $\sigma_3^2$ and define
$\hat{X}$ to be resulting matrix:
\begin{align*}
\hat{X}_{i, j} = X_{i, k_1 + j} \sigma_{3, j}^2
\end{align*}
for $i=1,...,n$ and for $j = 1,...,k_2$.
We define $\hat{\beta_2}$ to be the vector
\begin{align*}
\hat{\beta}_{2, i} = \beta_{2, i} \sigma_i^2,
\end{align*}
and define $\hat{f}$ by the unnormalized density
\begin{align*}
\hat{f}(\hat{\beta}, \sigma_1, \sigma_2) = \frac{e^{-\sigma_1^2/2-\sigma_2^2/2}}{\sigma_1^n \sigma_2^{k_1}} e^{-\frac{1}{2\sigma_1^2} \|\hat{X} \hat{\beta} - y\|^2}e^{-\frac{1}{2\sigma_2^2}\|\beta_1\|^2}e^{-\frac{1}{2} \| \hat{\beta}_2 \|^2}.
\end{align*}
It follows that posterior means and standard deviations of $\beta_2$ become
\begin{align*}
E_f[\beta_{2, i}] = \sigma_i^2 E_{\hat{f}}[\hat{\beta}_{2, i}], \qquad E_f[(\beta_{2, i} - E_f[\beta_{2, i}])^2] = \sigma_i E_{\hat{f}}[(\hat{\beta_{2, i}} - E_{\hat{f}}[\hat{\beta}])^2],
\end{align*}
and all other posterior first and second moments are unchanged under density $\hat{f}$.
We've now reduced the problem of finding moments of $f$ to
finding moments of $\hat{f}$, which is equal to $q(\hat{\beta}, \sigma_1, \sigma_2, 1)$
where $q$ is defined in (\ref{10}).

At this point, we rely on the analysis and numerical tools of the 
two-group normal-normal model for evaluating posterior moments. The only difference between 
evaluation of moments of the two-group model is that in the two group model
the marginal density is a $3$-dimensional density over $(\sigma_1, \sigma_2, \sigma_3)$
whereas in the mixed effects model, the marginal density is over two dimensions, 
$(\sigma_1, \sigma_2)$ and $\sigma_3 = 1$ is fixed. 
As a result, we perform the change of variables 
from $(\sigma_1, \sigma_2, 1)$ to polar 
coordinates 
\begin{align*}
& \sigma_1 = \rho\cos(\phi) \\
& \sigma_2 = \rho \sin(\phi) \cos(\theta) \\
& 1 = \rho \sin(\phi) \sin(\theta),
\end{align*}
or equivalently
\begin{align*}
& \rho = \sqrt{\sigma_1^2 + \sigma_2^2 + 1} \\ 
& \theta = \text{atan}(1 / \sigma_2) \\
& \phi = \text{acos}\bigg(\frac{\sigma_1}{\sqrt{\sigma_1^2 + \sigma_2^2 + 1}}\bigg).
\end{align*}
The differentials become
\begin{align*}
d\sigma_1 d\sigma_2 = | \, \gamma \, | ^{-1} \, d\theta d\phi,
\end{align*}
where
\begin{align*}
\gamma  \, = \, \frac{-1}{1 + \sigma_2^2} 
\bigg( \frac{1}{\sqrt{\alpha}} - \frac{\sigma_1^2}{\alpha^{3/2}} \bigg)\
\bigg( 1 - \frac{\sigma_1^2}{\alpha} \bigg)^{-1/2}
\end{align*}
and
\begin{align*}
\alpha = \sigma_1^2 + \sigma_2^2 + 1.
\end{align*}
We can now evaluate the posterior of $\hat{f}$ with Algorithm \ref{a10}, where
the integral with respect to $\rho$ is replaced with $\rho = \sqrt{\sigma_1^2 + \sigma_2^2 + 1}$.

\section{A simple example: Hierarchical linear model}\label{sec:rats}

We demonstrate the algorithm on a hierarchical linear model describing the growth of a group of young rats over a period of several weeks; this is a small example that has been used in the statistical literature \citep{gelfand:1990}.
In the experiment, the weight of each rat is measured at regular time intervals. 
Regression coefficients are computed for each rat; that is, for the $j^\mathrm{th}$ rat, we estimate an intercept $\alpha_j$ and a linear coefficient $\beta_j$.
We assign a normal prior on both parameters and estimate the prior scale.
The full model is as follows:
\begin{equation}\label{717}
\begin{split}
  y_i & \sim \mathrm{normal}(X^1_i \alpha + X^2_i \beta, \sigma_1) \\
  \alpha_j & \sim \mathrm{normal}(0, \sigma_2) \\
  \beta_j & \sim  \mathrm{normal}(0, \sigma_3) \\
  \sigma_k & \sim \mathrm{normal}^+(0, 10) \mbox{ for } k=1,2,3,
\end{split}
\end{equation}
where $X_1$ is an indicator matrix indicating to which rat each observation (weighing) corresponds; $X_2$ is that same indicator matrix multiplied by $w - \bar w$, where $w$ is the observation week and $\bar w$ the mean observation week.
In other words, we have an intercept and a slope parameter for each rat.
The data is centered at 0 and the priors on the scale parameters are  weakly informative.

We demonstrate the efficiency of the two-group normal-normal Algorithm \ref{a10} on evaluating posterior means and standard deviations of the rats model on randomly generated data. We assume an experiment with $100$ rats and $20$ weighing times and randomly generated data for each weighing. As a result, matrix $X_1$ and $X_2$ of model (\ref{717}) are $2000 \times 100$ matrices. 

Because the data size is relatively small and the data matrices have a friendly structure, running MCMC with Stan (4 chains in parallel, each with 1,000 warmup iterations and 1,000 sampling iterations) only takes 13.9s.
Our algorithm takes 1.6s and achieves significantly smaller errors than MCMC estimates.
Figure~\ref{fig:rats_err} shows the error of the MCMC estimates as a function of time and the accuracy achieved by our algorithm. 
For problems with larger data, the difference in time scale becomes important.

\begin{figure}
    \centering
    \includegraphics[width=\textwidth]{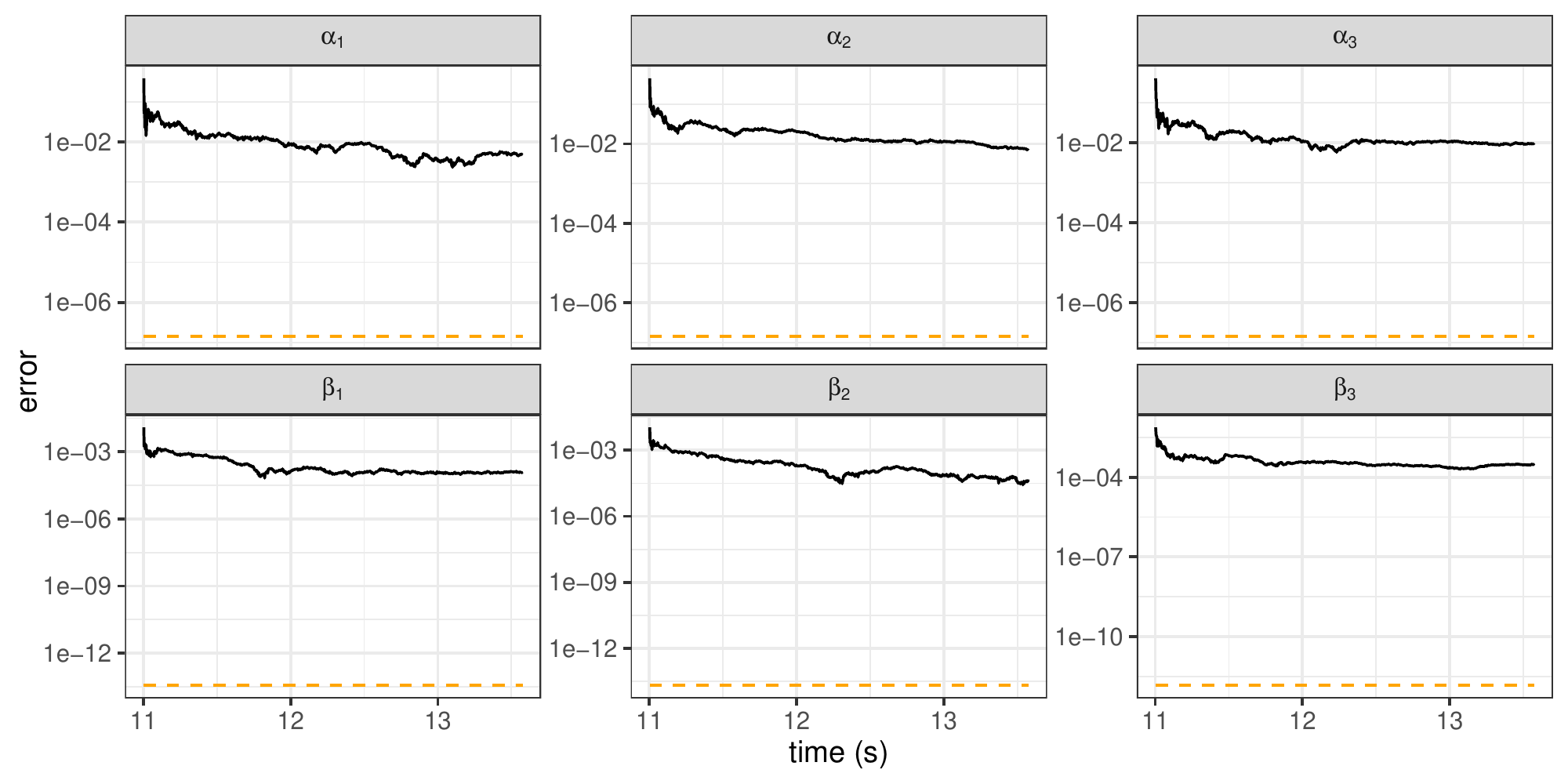}
    \caption{\em Error of MCMC estimates via Stan as a function of run time. The horizontal orange line is the error of Algorithm \ref{a10}.}
    \label{fig:rats_err}
\end{figure}

\section{Application: COVID-19 symptom survey} \label{sec:covid}

As of the writing of this article, the coronavirus pandemic is still raging in
many countries and stressing healthcare systems around the world.
A challenge at the start of the pandemic was tracking its spread, 
especially in locations where reliable testing was not widely available. Having accurate 
estimates of infection rates across geographical regions can be extremely
helpful. For example, reliable estimates allow hospital systems to allocate resources 
efficiently, they can alert residents of the need to take extra precaution 
in their daily routines, and they can facilitate better policy from local governments.
In order to get improved estimates of infection rates in the absence of widespread testing, initiatives were deployed in early 2020 in several countries that
allowed individuals to report symptoms via publicly available surveys 
\citep[e.g.,][]{Segal:2020}.

One country where these surveys provided valuable information was 
Israel  \citep{Rossman:2020}, where demographic and health data was 
provided by tens of thousands of respondents across the country. 
The large amount of data collected from survey respondents provided 
data scientists and policy-makers with a great resource, however at the 
same time, large amounts of data turns the computational aspect of 
statistical modeling into a substantial challenge. 

In this section, we present an exploratory model used to analyze data 
from the COVID-19 survey conducted in Israel \citep{Rossman:2020}. Using straightforward
MCMC with Stan \citep{carpenter1} was inconvenient; using the full
data set resulted in runtimes of several hours. Using the algorithms
of this paper, we were able to evaluate posterior moments in seconds without 
loss of accuracy. 

\subsection*{Multilevel regression and poststratification procedure}

The respondents are anonymous, but several of their features are recorded, including their age and the city in which they live.
We can use the data to identify regions in which the average symptom score seems unusually high.


A first exploratory model uses an intercept, age group, and population density in the respondent's city, as covariates, $X$, and an indicator matrix $Z$ for city:
$$y_i \sim \mathrm{normal}(X\beta + Z u, \sigma_1),$$
with a hierarchical prior on the city parameters,
$$u_j\sim \mathrm{normal}(0,\sigma_2),$$
and weakly informative priors on the other coefficients,
$$\beta_j\sim\mathrm{normal}(0,1).$$
This unit prior is weakly informative if the outcome $y_i$ has been standardized and the continuous predictors (in this case, population density) has also been standardized to be on unit scale.

In addition, we put weakly informative half-normal priors (standard normal distributions restricted to the non-negative reals) on the hyperparameters $\sigma_1$ and $\sigma_2$:
\begin{eqnarray*}
  \sigma_1 & \sim & \mathrm{normal}^+(0, 1) \\
  \sigma_2 & \sim & \mathrm{normal}^+(0, 1).
\end{eqnarray*}
This corresponds to a two-group normal-normal model with an additional covariate.
In cities where $u$ cannot be well estimated due to a low response rate, we can rely on the rest of the model, that is a regression model based on age and population density.

Only a fraction of the population responds to the survey, which raises questions about biases.
This is notably a concern because different age groups behave differently: not only do their chances of contracting and spreading the disease vary, their susceptibility to the disease also changes.
In multilevel regression and poststratification (MRP), we adjust for these biases by using  estimates of the proportion of people in each city that belong to each age group. 
For this model, the proportions are estimated using census data.
This leads to a corrected estimate for the expected symptom score of an individual in city $i$:
\begin{equation*}
    \tilde u_i = u_i + \beta_0 + \beta_{\mathrm{density}} d_{i} + \sum_{j = 1}^n a^i_j \beta_{\mathrm{age}, j},
\end{equation*}
where $\beta_0$ is the intercept, $\beta_{\mathrm{density}, i}$ is the regression coefficient of the population density covariate, $d_i$ denotes the density of city $i$, 
$a^i_j$ is the proportion of individuals in the $j^\mathrm{th}$ age group in the $i^\mathrm{th}$ city, and $\beta_{\mathrm{age}, j}$ is the regression coefficient of age group $j$. 

Using the means and covariances of $u, \beta_0, \beta_\mathrm{density}$, and $\beta_{\mathrm{age}}$ we compute the posterior mean and variance for $\tilde u$, per the following formulas.
Given a linear combination of random variables, $Y = \sum_i \delta_i Z_i$, we have
\begin{equation*}
    E Y = \sum_i \delta_i E Z_i,
\end{equation*}
and
\begin{equation*}
    \mathrm{Var} Y = \sum_i \delta_i^2 \mathrm{Var} Z_i + 2 \sum_{i < j} \delta_i \delta_j \mathrm{Cov}(Z_i, Z_j).
\end{equation*}
Moreover, variance and thence standard deviations of $\tilde u$ can be computed, provided we also evaluate the relevant posterior covariances.

\subsection*{Comparison of our algorithm to MCMC}

We analyze the data collected over the two weeks between April $15^\mathrm{th}$ and $30^\mathrm{th}$ 2020, across 351 cities.
These are cities for which we know, through census data, the population density and the age distribution.
The total number of responses is 135,501.

Our proposed algorithm returns the posterior mean and standard deviation for all variables of interest and takes $\sim$7s to run.

We next fit the model in Stan using the default dynamic HMC sampler.
After warming up the sampler for 500 iterations, we compute another 500 draws, using 4 chains computed in parallel, for a total of 2,000 sampling iterations.
The wall time for this procedure is $\sim$12,000s ($>$3 hours).
For each city, we computed the Monte Carlo mean.
Figure~\ref{fig:covid_mrp} plots the posterior mean and standard deviation of $\tilde u$ for all cities, computed by both methods.
Figure~\ref{fig:covid_error} shows the difference between our algorithm and the Monte Carlo estimate, as a function of computation time.
While it takes on the order of hours to get accurate results with MCMC, our algorithm achieves better results within seconds.

\begin{figure}
 \centering
  \resizebox{\linewidth}{!}{
    \includegraphics{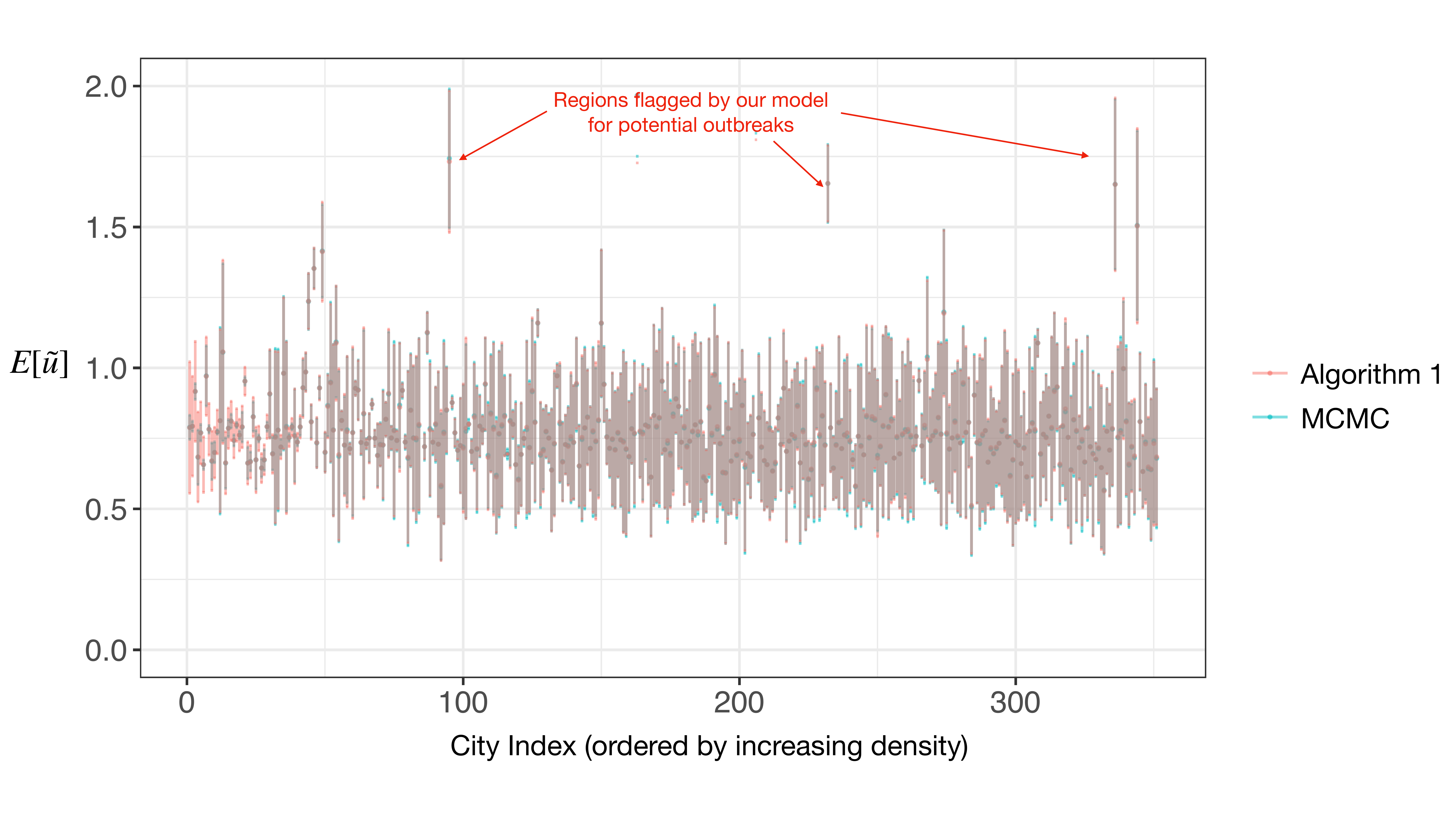}
    }
    \caption{\em Posterior mean and standard deviation for $\tilde u$ computed using Algorithm 1 and MCMC. The points represent the estimated mean and the ``error bars'' span two standard deviations.}
    \label{fig:covid_mrp}
    
\end{figure}

\begin{figure}
    \centering
    \includegraphics[width=6in]{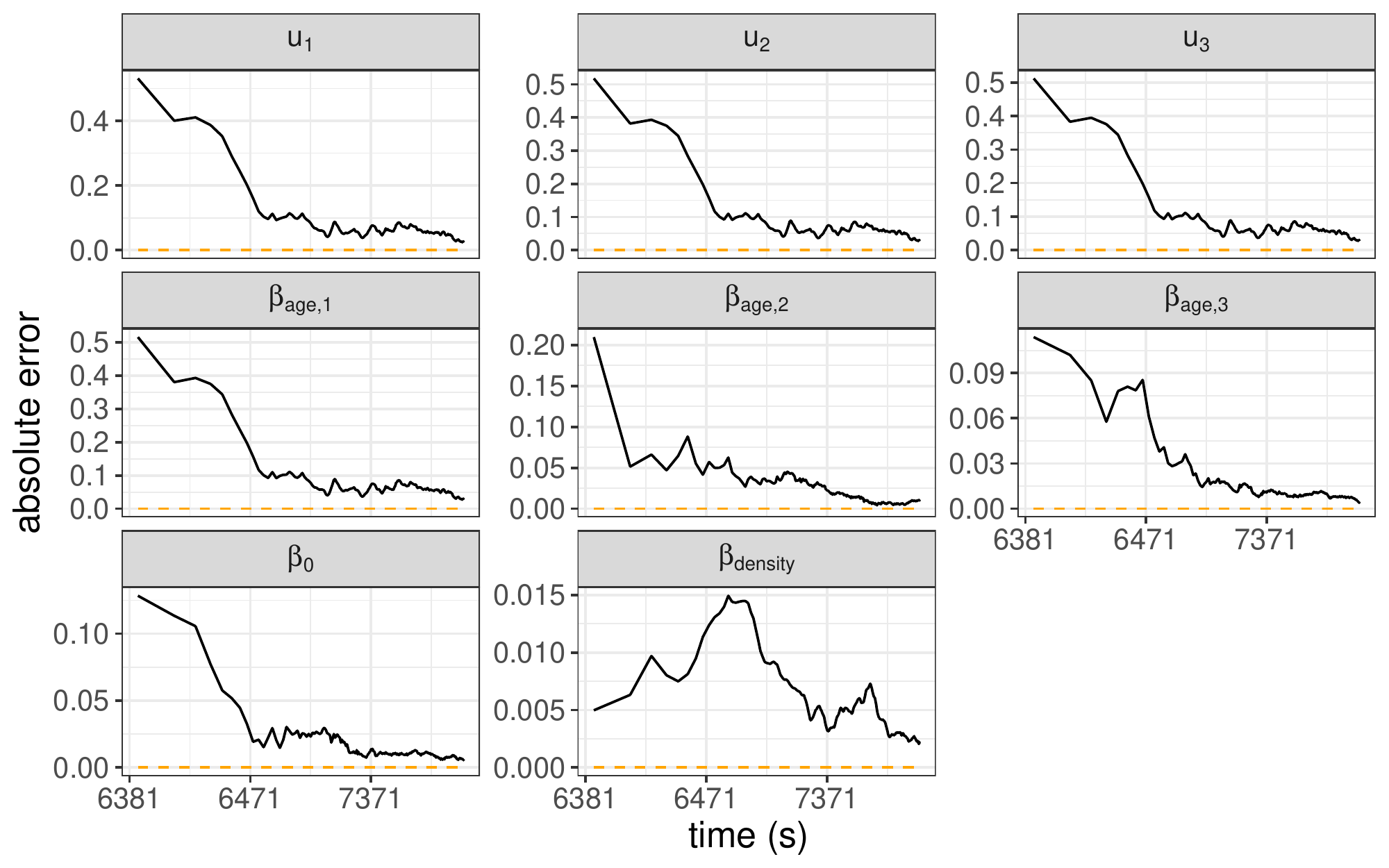}
    \caption{\em Error of MCMC estimates via Stan as a function of run time.
As a benchmark, we use the estimate returned by Algorithm \ref{a10}.
    The horizontal orange line is the error of our method, which took $7$ seconds to run. MCMC in Stan required over $6,000$ seconds of warmup before error can be measured. Total run time for Stan was $\sim12,000$ seconds.}
    \label{fig:covid_error}
\end{figure}

\begin{table}
    \centering
    \begin{tabular}{c | c | c |}
        Regression & MCMC Accuracy & Algorithm \ref{a10} Accuracy \\
         coefficient & ($\sim$12,000 s) & (7 s) \\\hline
        $\beta_0$ & 3e-02 & 1e-05 \\
        $u_{1}$ & 1e-02 & 4e-04 \\
        $u_{2}$ & 1e-02 & 3e-04 \\
        $u_{3}$ & 2e-02 & 6e-05 \\
        $\beta_{\mathrm{age}, 1}$ & 3e-02 & 7e-06\\
        $\beta_{\mathrm{age}, 2}$ & 3e-02 & 8e-06  \\
        $\beta_{\mathrm{age}, 3}$ & 3e-02 & 9e-08 \\
        $\beta_{\mathrm{density}}$ & 2e-03 & 2e-05 \\
    \end{tabular}
    \caption{\em Accuracy of the approximation of posterior means for several regression coefficients with both MCMC and Algorithm \ref{a10}. For MCMC, the total time for the approximation was $\sim$12,000s. Total time for Algorithm \ref{a10} was 7s.}
    \label{tab:my_label}
\end{table}

\subsection*{Limitations of the model and our numerical method}

We believe the presented model offers an improvement on the analysis conducted on the survey data \citep{Rossman:2020}, because (i) it uses full Bayesian inference to quantify uncertainty and (ii) it corrects sampling biases using a poststratification step.
A more careful quantification of uncertainty would use posterior intervals, rather than posterior variance. Such an interval can be estimated using MCMC draws.
Extending our numerical scheme into a sampling scheme to estimate such intervals is a direction we are actively pursuing.

For the model of this paper we only used a fraction of the available covariates,
that is, the data collected in survey responses. As a result, the model 
can be extended to include more than two groups.
Estimates tend to be noisy because the studied covariates can be strongly 
correlated with the outcome. For example, age is correlated with intensity 
of symptoms. The marginal correlation however is weak.
This, and other considerations, suggest that it might be beneficial from a modeling 
standpoint to build a more sophisticated model, which might be outside of the 
scope of application of the methods of this paper.
Nevertheless, the model considered here is an important step in the development of a better model.

\section{Application: Public opinion on abortion policies}\label{sec:abort}

We next apply our method to a hierarchical linear regression used to model attitudes on abortion policies as they vary across states, ethnicity, age groups, and education levels.
Modeling this heterogeneity requires partitioning an initially large data set into small groups.
Furthermore, we must address biases that can arise in our survey and correct them using more comprehensive surveys, such as census data.
As in Section~\ref{sec:covid}, we use MRP to do inference for small slices of big data and correct biases in our survey.

We analyze data from the 2018 Cooperative Congressional Election Study (CCES) using, as in the case study of \citep{Lopez-Martin:2020}, a random subset of 5,000 respondents.
Respondents express support or opposition on six abortion policies, for example ``Ban abortion after the 20$^\text{th}$ week of pregnancy'' or ``Allow employers to decline coverage of abortion in insurance plan.''
These policies are intended to restrict access to abortion.
Each respondent is given a support score, $y$, ranging from 0 to 6, indicating the number of supported policies.

We use a normal likelihood with the following covariates, recorded for each respondent: state, ethnicity, age group, education level, and sex.
We use the proportion of votes for the Republican party in the state in 2016 as an additional predictor, denoted as \texttt{repvote}.
The model also admits an intercept term.
The statistical formulation of the model is the following:
\begin{eqnarray*}
    y_i & \sim & \text{normal}(\beta_0 +
    X^\text{state}_i \beta_\text{state} +
    X^\text{ethnicity}_i \beta_\text{ethnicity} +
    X^\text{age} \beta_\text{age}\\
    & & \ \ \ \ \ \ \ \ \ \ +
    X^\text{sex}  \beta_\text{sex} + X^\text{education} \beta_\text{education} + X^\text{repvote} \beta_\text{repvote}, \,\sigma_1) 
\end{eqnarray*}
The difficult parameters to estimate here are the state coefficients, to which we give $\mbox{normal}(0,\sigma_2)$ priors.  Because the model includes \texttt{repvote}, the partial pooling is done toward the prediction of the state based on its previous vote, not toward the national mean.

Table~\ref{tab:abortion} summarizes the performance of the algorithm on this model.
The posterior mean and standard deviation of the MRP estimates for each state can be computed as in Section~\ref{sec:covid} and are plotted in Figure~\ref{fig:abort}.

\begin{table}[!ht]
    \centering
    \begin{tabular}{l l l l c}
         $n$ & $k_1$ & $k_2$ & max error & total time (s)  \\
         \hline
         5000 & 50 & 19 & $1.2 \times 10^{-8}$ & 0.05
    \end{tabular}
    \caption{\em Computation time and accuracy of Algorithm \ref{a10} applied to a model of support/opposition for abortion policies. The column ``max error" shows the maximum error of posterior means and standard deviations of regression coefficients and scale parameters.}
    \label{tab:abortion}
\end{table}

\begin{figure}
    \centering
    \includegraphics[width = 6in]{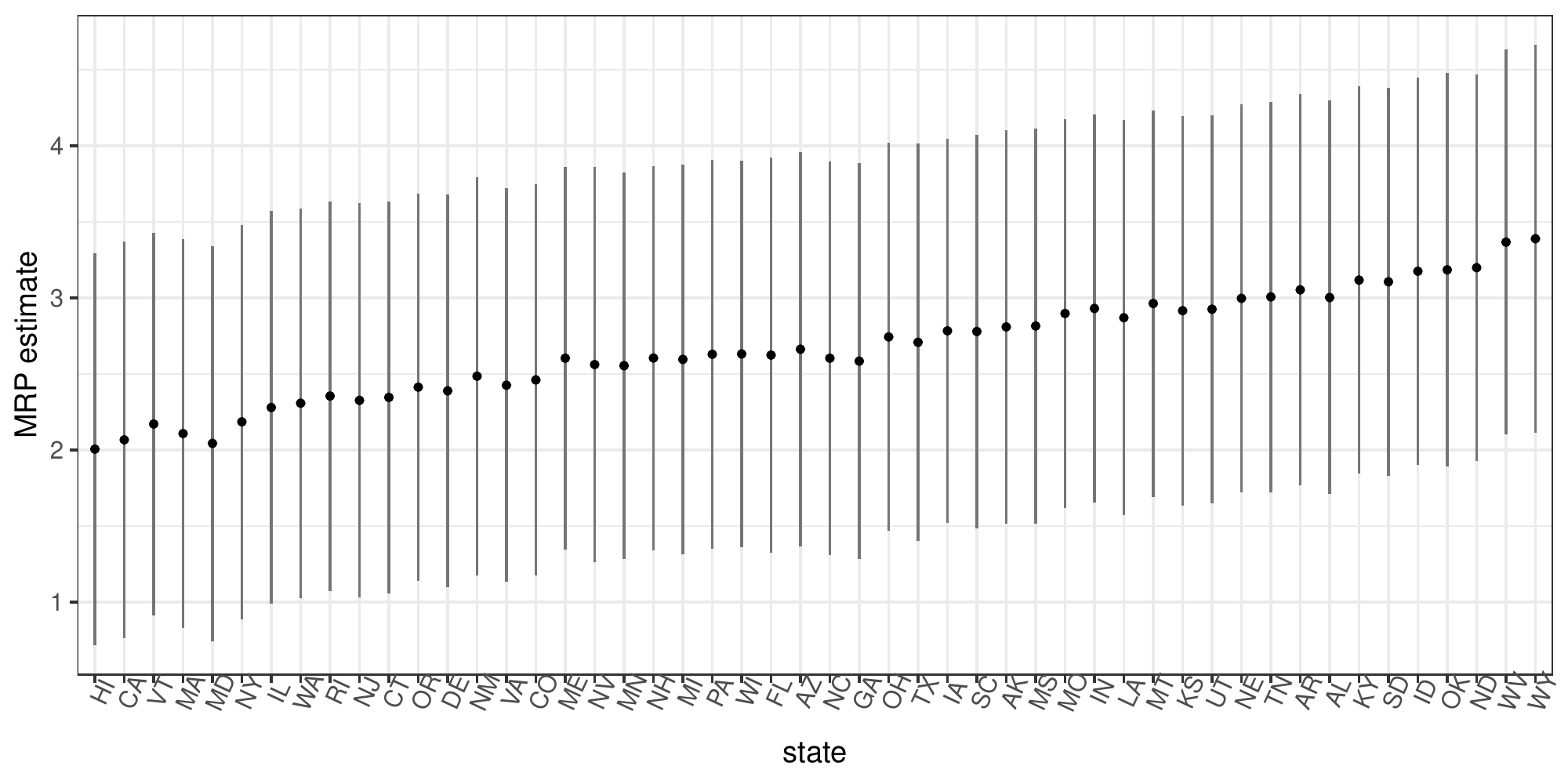}
    \caption{\em MRP estimate of the expected level of support for anti-abortion policies in each state. The point represents the posterior mean, and the bars span two posterior standard deviations. The states are ordered based on Republican vote share in the 2016 presidential election.}
    \label{fig:abort}
\end{figure}

Figure~\ref{fig:abort} shows that the expected support score increases with the level of support for the Republican party, bearing some fluctuations.
The large posterior standard deviations indicate there is quite a bit of heterogeneity within each state.
For further insight, we may examine how groups other than states, e.g. ethnic groups, age groups, etc. behave.

The present model has certain limitations.
First, one could consider interaction terms.
This seems sensible since, for instance, white males with no college education likely behave differently than white males with a college degree.
The numerical method presented in this paper can handle interaction terms.
Computing the posterior standard deviation of the MRP estimate however requires some data wrangling.
We plan to create an R package with routines that seamlessly implement these MRP calculations, making it straightforward for modelers to experiment with different covariates and interaction terms.

There is also interest in nonlinear models with non-normal likelihoods.
\citet{Lopez-Martin:2020} consider an item-response or ideal-point logistic regression.  This sort of model can better capture certain characteristics of the data, such as dependence among different survey responses.
For such models, we cannot use the proposed integration scheme.
This presents us with a tradeoff: the proposed algorithm takes a fraction of a second to run, while fitting the ideal point model with Stan's MCMC takes hundreds of seconds.
The difference is more severe if, rather than fitting a subset of 5,000 respondents, we use all 60,000 respondents in the survey.
The modeler then needs to assess how useful it is to use a non-normal likelihood.
Even then, the normal likelihood model can be a fast way to do model exploration, by for example examining various covariates and interaction terms.

\section{Conclusions and generalizations}\label{sec:conclusion}
In this paper we describe a class of fast algorithms for evaluating the posterior moments of two Bayesian linear regression models:

\begin{enumerate}
\item Two-group normal-normal: The two-group normal-normal model is used to model a continuous outcome with two groups of parameters:
\bb
\begin{split}
&y \sim \text{normal}(X_1\beta_1 + X_2\beta_2, \sigma_3) \\
&\beta_1 \sim \text{normal}(0, \sigma_1)\\
&\beta_2 \sim \text{normal}(0, \sigma_2)
\end{split}
\ee
where $X_1$ is a $n \times k_1$ matrix of predictors, $\beta_1 \in \R^{k_1}$ is a vector
of regression coefficients, $X_2$ is a $n \times 
k_2$ matrix of predictors,  and $\beta_2 \in \R^{k_2}$ is a vector of regression coefficients.

\item Mixed effects model: 
The mixed-effects model is a slight variant of the two-group normal-normal model. In the mixed-effects model we model the scale parameter on one group of coefficients and  assign fixed scale parameters to the priors on all other coefficients:
\bb
\begin{split}
&y \sim \text{normal}(X_1\beta_1 + X_2\beta_2, \sigma_3) \\
&\beta_1 \sim \text{normal}(0, \sigma_1)\\
&\beta_{2,i} \sim \text{normal}(0, \sigma_{2,i})
\end{split}
\ee
where $\sigma_{2,i}$ is the fixed scale parameter prior on each regression coefficient $\beta_{2,i}$ for $i=1,...,k_2$ where $\beta_2 \in \R^{k_2}$.
\end{enumerate}

The algorithms of this paper allow for assigning a general choice of priors on the scale parameters. We demonstrated the performance of our algorithm for posterior inference on two applications. In Section \ref{sec:covid} we used COVID-19 symptom survey data to model geographic and age effects. We also used the mixed-effects model with public opinion survey data to estimate geographic and demographic impacts on attitudes towards abortion.  These are both existing applications that have been fit with MCMC; by allowing these models to be fit much faster, our algorithm can facilitate a workflow in which users can fit and explore many more models in real time.

The algorithms of this paper provide substantial improvements over standard MCMC methods in both computation time and accuracy in approximating posterior moments. 
These improvements rely on analytically integrating the regression coefficients, which make up the bulk of the posterior dimensions, and then numerically integrating the remaining low-dimensional density with Gaussian quadrature.

Many of the techniques and analysis used in this paper generalize to multilevel and multigroup models with more than two-groups. For an $m$ group model, the numerical integration of our algorithm is computed over a $m+1$ dimensional density. For models with large $m$ (large number of groups) the analytic marginalization of this paper can still be applied, however, integration via a tensor product of Gaussian nodes will not be feasible. On the other hand, using MCMC or 
other integration schemes can be used on the $m+1$ dimensional marginal density. 

Bayesian models with more than two groups and non-Gaussian likelihoods are directions  of future research.

\section{Acknowledgements}
The authors are grateful to Hagai Rossman and Ayya Keshet for useful discussions and their contribution to the COVID-19 model.

\appendix
\section{Integral with respect to \texorpdfstring{$\rho$}{ρ}}\label{app1}
In this section we describe analytical properties of the integrand of the inner integral of (\ref{10}) that are used in the evaluation of the integral. 

Let $\psi : \bR^{+3} \rightarrow \bR$ be defined by the formula
\begin{equation}\label{330}
\psi(\rho, c, n) = 
\frac{e^{-\frac{\rho^2}{2}-\frac{c}{2\rho^2}}}{\rho^{n-2}}.
\end{equation}
We seek, for fixed $c$ and $n$, the value for $\rho$ that maximizes
$\psi(\rho, c, n)$. 
We observe that 
\bb
\begin{split}
\frac{\partial \psi}{\partial \rho} 
&= e^{-\frac{\rho^2}{2}-\frac{c}{2\rho^2}}
\left(\rho^{1-n}(2-n) + \left(\frac{c}{\rho^3}-\rho\right)\rho^{2-n}\right) \\
&= e^{-\frac{\rho^2}{2}-\frac{c}{2\rho^2}}
\rho^{1-n} \left(2-n + \frac{c}{\rho^2}-\rho^2\right)
\end{split}
\ee
Setting
\bb
2 -n + \frac{c}{\rho^2} - \rho^2 = 0
\ee
and rearranging terms, we have
\bb
\rho^4 + \rho^2(n-2) -c = 0.
\ee
Then setting
\begin{equation}
\rho_{max} = \frac{1}{\sqrt{2}}\left(\sqrt{4c+(n-2)^2}-n+2\right)^{1/2}
\end{equation}
we observe
\bb
\frac{\partial \psi}{\partial \rho} (\rho_{max}) = 0.
\ee
That is, for fixed $c, n$, we have $\psi$ achieves its maximum at 
\begin{equation}\label{320}
\rho = \frac{1}{\sqrt{2}}\left(\sqrt{4c+(n-2)^2}-n+2\right)^{1/2}.
\end{equation}
Furthermore, 
\bb
\frac{\partial}{\partial \rho}\log(\psi(\rho, c, n))
= -\rho + \frac{c}{\rho^3} - \frac{n-2}{\rho}
\ee
and
\bb
\frac{\partial^2}{\partial \rho^2}\log(\psi(\rho, c, n))
= -1 - \frac{3c}{\rho^4} + \frac{n-2}{\rho^2}.
\ee

\bibliography{refs}

\begin{thebibliography}{22}
\providecommand{\natexlab}[1]{#1}
\providecommand{\url}[1]{\texttt{#1}}
\expandafter\ifx\csname urlstyle\endcsname\relax
  \providecommand{\doi}[1]{doi: #1}\else
  \providecommand{\doi}{doi: \begingroup \urlstyle{rm}\Url}\fi

\bibitem[Bardini et~al.(2017)Bardini, Politano, Benso, and {Di Carlo}]{35}
R.~Bardini, G.~Politano, A.~Benso, and S.~{Di Carlo}.
\newblock Multi-level and hybrid modelling approaches for systems biology.
\newblock \emph{Computational and Structural Biotechnology Journal},
  15:\penalty0 396--402, 2017.
\newblock ISSN 2001-0370.
\newblock \doi{https://doi.org/10.1016/j.csbj.2017.07.005}.
\newblock URL
  \url{https://www.sciencedirect.com/science/article/pii/S2001037017300314}.

\bibitem[Betancourt(2018)]{betancourt:2018}
Michael Betancourt.
\newblock A conceptual introduction to {Hamiltonian Monte Carlo}, 2018.

\bibitem[Betancourt et~al.(2015)Betancourt, Byrne, and
  Girolami]{betancourt:2015}
Michael Betancourt, Simon Byrne, and Mark Girolami.
\newblock Optimizing the integrator step size for hamiltonian monte carlo.
\newblock \emph{arXiv}, stat/1411.6669, 2015.

\bibitem[Blei et~al.(2017)Blei, Kucukelbir, and McAuliffe]{vi}
David~M. Blei, Alp Kucukelbir, and Jon~D. McAuliffe.
\newblock Variational inference: A review for statisticians.
\newblock \emph{Journal of the American Statistical Association}, 112\penalty0
  (518):\penalty0 859--877, 2017.

\bibitem[Carpenter et~al.(2017)Carpenter, Gelman, Hoffman, Lee, Goodrich,
  Betancourt, Brubaker, Guo, Li, and Riddell]{carpenter1}
Bob Carpenter, Andrew Gelman, Matthew~D. Hoffman, Daniel Lee, Ben Goodrich,
  Michael Betancourt, Marcus Brubaker, Jiqiang Guo, Peter Li, and Allen
  Riddell.
\newblock Stan: A probabilistic programming language.
\newblock \emph{Journal of Statistical Software, Articles}, 76\penalty0
  (1):\penalty0 1--32, 2017.

\bibitem[Gelfand et~al.(1990)Gelfand, Hills, Racine-Poon, and
  Smith]{gelfand:1990}
Alan~E. Gelfand, Susan~E. Hills, Amy Racine-Poon, and Adrian F.~M. Smith.
\newblock Illustration of {Bayesian} inference in normal data models using
  {Gibbs} sampling.
\newblock \emph{Journal of the American Statistical Association}, 85\penalty0
  (412):\penalty0 972--985, 1990.
\newblock ISSN 01621459.
\newblock URL \url{http://www.jstor.org/stable/2289594}.

\bibitem[Gelman et~al.(2013)Gelman, Carlin, Stern, Dunson, Vehtari, and
  Rubin]{bda}
A.~Gelman, J.~B. Carlin, H.~S. Stern, D.~B. Dunson, A.~Vehtari, and D.~B.
  Rubin.
\newblock \emph{{Bayesian Data Analysis}}.
\newblock Chapman and Hall/CRC, New York, NY, 3rd edition, 2013.

\bibitem[Gelman and Hill(2006)]{gelman_hill_2006}
Andrew Gelman and Jennifer Hill.
\newblock \emph{Data Analysis Using Regression and Multilevel/Hierarchical
  Models}.
\newblock Analytical Methods for Social Research. Cambridge University Press,
  2006.
\newblock \doi{10.1017/CBO9780511790942}.

\bibitem[Gelman et~al.(2020)Gelman, Vehtari, Simpson, Margossian, Carpenter,
  Yao, Kennedy, Gabry, Bürkner, and Modrák]{bayesian_workflow}
Andrew Gelman, Aki Vehtari, Daniel Simpson, Charles~C. Margossian, Bob
  Carpenter, Yuling Yao, Lauren Kennedy, Jonah Gabry, Paul-Christian Bürkner,
  and Martin Modrák.
\newblock Bayesian workflow.
\newblock \emph{arXiv}, stat/2011.01808, 2020.

\bibitem[Greengard et~al.(2021)Greengard, Gelman, and Vehtari]{greeng1}
P.~Greengard, A.~Gelman, and A.~Vehtari.
\newblock {A Fast Linear Regression via SVD and Marginalization}.
\newblock \emph{Computational Statistics}, 2021.
\newblock \doi{10.1007/s00180-021-01135-x}.
\newblock URL \url{https://doi.org/10.1007/s00180-021-01135-x}.

\bibitem[Greenland(2000)]{greenland1}
Sander Greenland.
\newblock {Principles of multilevel modelling}.
\newblock \emph{International Journal of Epidemiology}, 29\penalty0
  (1):\penalty0 158--167, 02 2000.
\newblock \doi{10.1093/ije/29.1.158}.
\newblock URL \url{https://doi.org/10.1093/ije/29.1.158}.

\bibitem[Hoffman and Gelman(2014)]{hoffman:2014}
Matthew~D. Hoffman and Andrew Gelman.
\newblock The {No-U-Turn} sampler: Adaptively setting path lengths in
  {Hamiltonian Monte Carlo}.
\newblock \emph{Journal of Machine Learning Research}, 15\penalty0
  (47):\penalty0 1593--1623, 2014.
\newblock URL \url{http://jmlr.org/papers/v15/hoffman14a.html}.

\bibitem[Kristensen et~al.(2016)Kristensen, Nielsen, Berg, Skaug, and
  Bell]{tmb}
Kasper Kristensen, Anders Nielsen, Casper~W. Berg, Hans Skaug, and Bradley~M.
  Bell.
\newblock Tmb: Automatic differentiation and laplace approximation.
\newblock \emph{Journal of Statistical Software, Articles}, 70\penalty0
  (5):\penalty0 1--21, 2016.
\newblock \doi{10.18637/jss.v070.i05}.
\newblock URL \url{https://www.jstatsoft.org/v070/i05}.

\bibitem[Lindley and Smith(1972)]{lindley1}
D.V. Lindley and A.F.M. Smith.
\newblock Bayes estimates for the linear model.
\newblock \emph{Journal of the Royal Statistical Society. Series B
  (Methodological)}, 34\penalty0 (1):\penalty0 1--41, 1972.
\newblock ISSN 00359246.
\newblock URL \url{http://www.jstor.org/stable/2985048}.

\bibitem[Lopez-Martin et~al.(2020)Lopez-Martin, Phillips, and
  Gelman]{Lopez-Martin:2020}
Juan Lopez-Martin, Justin~H. Phillips, and Andrew Gelman.
\newblock Multilevel regression and poststratification case studies.
\newblock 2020.
\newblock URL \url{https://juanlopezmartin.github.io/}.

\bibitem[Margossian et~al.(2020)Margossian, Vehtari, Simpson, and
  Agrawal]{Margossian:2020}
Charles~C. Margossian, Aki Vehtari, Daniel Simpson, and Raj Agrawal.
\newblock {Hamiltonian Monte Carlo} using an adjoint-differentiated {Laplace}
  approximation: {Bayesian} inference for latent {Gaussian} models and beyond.
\newblock In \emph{Advances in Neural Information Processing Systems}, 2020.

\bibitem[Merlo et~al.(2005)Merlo, Chaix, Yang, Lynch, and Rastam]{merlo1}
Juan Merlo, Basile Chaix, Min Yang, John Lynch, and Lennart Rastam.
\newblock {A brief conceptual tutorial of multilevel analysis in social
  epidemiology: linking the statistical concept of clustering to the idea of
  contextual phenomenon }.
\newblock \emph{Journal of epidemiology and community health}, 59\penalty0
  (6):\penalty0 443–449, 2005.
\newblock URL \url{https://doi.org/10.1136/jech.2004.023473}.

\bibitem[Rossman et~al.(2020)Rossman, Keshet, Shilo, Gavrieli, Bauman, Cohen,
  Shelly, Balicer, Geiger, Dor, and Segal]{Rossman:2020}
Hagai Rossman, Ayya Keshet, Smadar Shilo, Amir Gavrieli, Tal Bauman, Ori Cohen,
  Esti Shelly, Ran Balicer, Benjamin Geiger, Yuval Dor, and Eran Segal.
\newblock A framework for identifying regional outbreak and spread of
  {COVID-19} from one-minute population-wide surveys.
\newblock \emph{Nature Medicine}, 26\penalty0 (5):\penalty0 634 -- 638, 2020.
\newblock \doi{10.1038/s41591-020-0857-9}.

\bibitem[Rue et~al.(2017)Rue, Riebler, Sørbye, Illian, Simpson, and
  Lindgren]{inla}
Håvard Rue, Andrea Riebler, Sigrunn~H. Sørbye, Janine~B. Illian, Daniel~P.
  Simpson, and Finn~K. Lindgren.
\newblock Bayesian computing with {INLA}: A review.
\newblock \emph{Annual Review of Statistics and Its Application}, 4\penalty0
  (1):\penalty0 395--421, 2017.
\newblock \doi{10.1146/annurev-statistics-060116-054045}.
\newblock URL \url{https://doi.org/10.1146/annurev-statistics-060116-054045}.

\bibitem[Segal et~al.(2020)Segal, Zhang, Lin, King, Shalem, Shilo, Allen,
  Alquaddoomi, Altae-Tran, Anders, Balicer, Bauman, Bonilla, Booman, Chan,
  Cohen, Coletti, Davidson, Dor, Drew, Elemento, Evans, Ewels, Gale, Gavrieli,
  Geiger, Grad, Greene, Hajirasouliha, Jerala, Kahles, Kallioniemi, Keshet,
  Kocarev, Landua, Meir, Muller, Nguyen, Oresic, Ovchinnikova, Peterson,
  Prodanova, Rajagopal, Rätsch, Rossman, Rung, Sboner, Sigaras, Spector,
  Steinherz, Stevens, Vilo, and Wilmes]{Segal:2020}
Eran Segal, Feng Zhang, Xihong Lin, Gary King, Ophir Shalem, Smadar Shilo,
  William~E. Allen, Faisal Alquaddoomi, Han Altae-Tran, Simon Anders, Ran
  Balicer, Tal Bauman, Ximena Bonilla, Gisel Booman, Andrew~T. Chan, Ori Cohen,
  Silvano Coletti, Natalie Davidson, Yuval Dor, David~A. Drew, Olivier
  Elemento, Georgina Evans, Phil Ewels, Joshua Gale, Amir Gavrieli, Benjamin
  Geiger, Yonatan~H. Grad, Casey~S. Greene, Iman Hajirasouliha, Roman Jerala,
  Andre Kahles, Olli Kallioniemi, Ayya Keshet, Ljupco Kocarev, Gregory Landua,
  Tomer Meir, Aline Muller, Long~H. Nguyen, Matej Oresic, Svetlana
  Ovchinnikova, Hedi Peterson, Jana Prodanova, Jay Rajagopal, Gunnar Rätsch,
  Hagai Rossman, Johan Rung, Andrea Sboner, Alexandros Sigaras, Tim Spector,
  Ron Steinherz, Irene Stevens, Jaak Vilo, and Paul Wilmes.
\newblock Building an international consortium for tracking coronavirus health
  status.
\newblock 26\penalty0 (8):\penalty0 1161--1165, 2020.
\newblock \doi{10.1038/s41591-020-0929-x}.

\bibitem[Trefethen(2020)]{trefethen}
L.~N. Trefethen.
\newblock \emph{{Approximation Theory and Approximation Practice: Extended
  Edition}}.
\newblock SIAM, Philadelphia, PA, 2020.

\bibitem[Trefethen and Bau(1997)]{nla}
L.~N. Trefethen and D.~Bau.
\newblock \emph{{Numerical Linear Algebra}}.
\newblock SIAM, New York, NY, 1997.

\end{thebibliography}
\end{document}